 \gdef\xxxmark{%
   \expandafter\ifx\csname @mpargs\endcsname\relax 
     \expandafter\ifx\csname @captype\endcsname\relax 
       \marginpar{xxx}
     \else
       xxx 
     \fi
   \else
     xxx 
   \fi}
 \gdef\xxx{\@ifnextchar[\xxx@lab\xxx@nolab}
 \long\gdef\xxx@lab[#1]#2{{\bf [\xxxmark #2 ---{\sc #1}]}}
 \long\gdef\xxx@nolab#1{{\bf [\xxxmark #1]}}
\let\realbfseries=\bfseries
\def\bfseries{\realbfseries\boldmath}
\newif\ifabstract
\newif\iffull
\let\epsilon=\varepsilon
\newcommand{\be}{\begin{equation}}
\newcommand{\ee}{\end{equation}}
\newcommand{\ben}{\begin{equation*}}
\newcommand{\een}{\end{equation*}}
\newcommand{\bsp}{\begin{split}}
\newcommand{\ensp}{\end{split}}
\newtheorem{lemma}{Lemma}
\newtheorem{defi}{Definition}
\newtheorem{prop}{Proposition}
\begin{document}

\title{Typicality Graphs:\\Large Deviation Analysis }
\author{
\authorblockN{ Ali Nazari, Dinesh Krithivasan, S. Sandeep Pradhan, Achilleas Anastasopoulos\\}
\authorblockA{ Dept. of Electrical Engineering and computer science\\
University of Michigan, Ann Arbor\\
E-mail: \{anazari,dineshk,pradhanv,anastas\}@umich.edu\\}
%
\and
\authorblockN{ Ramji Venkataramanan\\}
\authorblockA{ Dept. of Electrical Engineering\\
Stanford University\\
E-mail: vramji@stanford.edu }
\and
%

}

\maketitle
\begin{abstract}
Let $\mathcal{X}$ and $\mathcal{Y}$ be finite alphabets and $P_{XY}$
a joint distribution over them, with $P_X$ and $P_Y$ representing
the marginals. For any $\epsilon > 0$, the set of $n$-length
sequences $x^n$ and $y^n$ that are jointly typical \cite{ckbook}
according to $P_{XY}$ can be represented on a bipartite graph. We
present a formal definition of such a graph, known as a
\emph{typicality} graph,  and study some of its properties.
\end{abstract}

\section{introduction}
The concept of typicality and typical sequences is central to
information theory. It has been used to develop computable
performance limits for several communication problems.

Consider a pair of correlated discrete memoryless information
sources $X$\footnote{We use the following notation throughout this
work. Script capitals $\mathcal{U}$, $\mathcal{X}$, $\mathcal{Y}$,
$\mathcal{Z}$,$\ldots$ denote finite, nonempty sets. To show the
cardinality of a set $\mathcal{X}$, we use $|\mathcal{X}|$. We also
use the letters $P$, $Q$,$\ldots$  for probability distributions on
finite sets, and $U$, $X$, $Y$,$\ldots$ for random variables.} and
$Y$ characterized by a generic joint distribution $p_{XY}$ defined
on the product of two finite sets $\mathcal{X} \times \mathcal{Y}$.
An length $n$  $X$-sequence $x^n$ is typical if the empirical
histogram of $x^n$ is close to $p_X$. A pair of length $n$ sequences
$(x^n,y^n) \in \mathcal{X}^n \times \mathcal{Y}^n$ is said to be
jointly typical if the empirical joint histogram of $(x^n,y^n)$ is
close to the joint distribution $p_{XY}$. The set of all jointly
typical sequence pairs is called the typical set of $p_{XY}$.

Given a sequence length $n$, the typical set can be represented in
terms of the following undirected, bipartite graph. The left
vertices of the graph are all the typical $X$-sequences, and the
right vertices are all the typical $Y$-sequences. From well-known
properties of typical sets, there are (approximately) $2^{nH(X)}$
left vertices and  $2^{nH(Y)}$ right vertices. A left vertex is
connected to a right vertex through an edge if the corresponding $X$
and $Y$-sequences are \emph{jointly} typical. From the properties of
joint typicality, we know that the number of edges in this graph is
roughly $2^{nH(X,Y)}$. Further,  every left vertex (a typical
$X$-sequence) has degree roughly equal to $2^{nH(Y|X)}$, i.e., it is
jointly typical with $2^{nH(Y|X)}$ $Y$-sequences. Similarly, each
right vertex  has degree roughly equal to $2^{nH(X|Y)}$.

  In this paper we formally characterize the typicality graph and look at some
subgraph containment problems. In particular, we answer three
questions concerning the typicality graph:
\begin{itemize}
\item When can we find subgraphs such that the left and right vertices of the subgraph
have specified degrees, say $R'_X$ and $R'_Y$, respectively ?
\item What is the maximum size of subgraphs that are complete, i.e., every left vertex is connected to every right
vertex? One of the main contributions of this paper is a sharp
answer to this question.

\item If we create a subgraph by randomly picking a specified number of left and right vertices,  what is the probability
that this subgraph has far fewer edges than expected?
\end{itemize}

These questions arise in a variety of multiuser communication
problems. Transmitting correlated information over a multiple-access
channel (MAC) \cite{PradhanChoi07}, and communicating over a MAC
with feedback \cite{RamjiMAC09} are two problems where the first
question plays an important role. The techniques used to answer the
second question have been used to develop tighter bounds on the
error exponents of discrete memoryless multiple-access
channels~\cite{nazari08},~\cite{nazari09},~\cite{nazari09-Arx}. The
third question arises in the context of transmitting correlated
information over a broadcast channel \cite{ChoiP08}. Moreover, the
evaluation of performance limits of a multiuser communication
problem can be thought of as characterizing certain properties of
typicality graphs of  random variables associated with the problem.

The paper is organized as follows. Some preliminaries are introduced
in section II. In section III, the typicality graphs are formally
defined and some properties about the number vertices, edges, and
degree conditions are  obtained. The main result of the paper which
is  obtained in section IV.

\section{Preliminaries}
In this section, we provide a concise review of some of the results
available in the literature on the typical sequences,
$\delta$-typical sets and their properties~\cite{ckbook}.

\begin{defi}
A  sequence ${x}^n \in \mathcal{X}^n$ is $X$-typical with constant
$\delta$ if
\begin{enumerate}
\item
$|\frac{1}{n}N(a|x^n)-P_{X}(a)|\leq \delta, \quad \forall a \in
\mathcal{X}$
\item
No $a\in \mathcal{X}$ with $P_X(a)=0$ occurs in $x^n$.
\end{enumerate}
The set of such sequences is denoted $T^n_{\delta}(P_X)$ or
$T^n_{\delta}(X)$, when the distribution being used is unambiguous.
\end{defi}

\begin{defi}
Given a conditional distribution $P_{Y|X}$, a sequence $y^n \in
\mathcal{Y}^n$ is conditionally $P_{Y|X}$-typical with $x^n \in
\mathcal{X}^n$ with constant $\delta$ if
\begin{enumerate}
\item
$|\frac{1}{n}N(a,b|x^n,y^n)-\frac{1}{n}N(a|x^n)P_{Y|X}(b|a)|\leq
\delta, \quad \forall a \in \mathcal{X}, b \in \mathcal{Y}.$
\item
$N(a,b|x^n,y^n) =0 $ whenever $P_{Y|X}(b|a)=0$.
\end{enumerate}
The set of such sequences is denoted $T^n_{\delta}(P_{Y|X}|x^n)$ or
$T^n_{\delta}(Y|x^n)$, when the distribution being used is
unambiguous.
\end{defi}
We will repeatedly use the following results, which we state below
as facts:

\textbf{Fact $1$} \cite[Lemma 2.10]{ckbook}: (a) If $x^n \in
T^n_{\delta}(X)$ and $y^n \in T^n_{\delta'}(Y|x^n)$, then $(x^n,y^n)
\in T^n_{\delta+\delta'}(X,Y)$ and $y^n \in T^n_{(\delta +
\delta')|\mathcal{X}|}(Y)$. \footnote{The typical sets are with
respect to distributions $P_X,P_{Y|X}$ and $P_{XY}$, respectively.}

(b) If $x^n \in T^n_{\delta}(X)$ and $(x^n,y^n) \in
T^n_{\epsilon}(X,Y)$, then $y^n \in T^n_{\delta +\epsilon}(Y|x^n)$.
\\

\textbf{Fact $2$} \cite[Lemma 2.13]{ckbook}  \footnote{The constants
of the typical sets for each $n$, when suppressed, are understood to
be some $\delta_n$ with $\delta_n \to 0$ and $\sqrt{n}\cdot \delta_n
\to \infty$ (delta convention).}: There exists a sequence
$\epsilon_n \to 0$ depending only on $|\mathcal{X}|$ and
$|\mathcal{Y}|$ such that for every joint distribution $P_X\cdot
P_{Y|X}$ on $\mathcal{X} \times \mathcal{Y}$, \be
\label{eq:typ_size}
\begin{split}
\left|\frac{1}{n}\log|T^n(X)|- H(X)\right| & \leq \epsilon_n \\
\left|\frac{1}{n}\log|T^n(Y|x^n)|-H(Y|X)\right| & \leq \epsilon_n,
\quad \forall x^n \in T^n(X).
\end{split}
\ee
\\
The next fact deals with the continuity of entropy with respect to
probability distributions.

\textbf{Fact $3$} \cite[Lemma 2.7]{ckbook} If $P$ and $Q$ are two
distributions on $X$ such that
\[ \sum_{x \in \mathcal{X}}|P(x)-Q(x)| \leq \epsilon \leq \frac{1}{2} \]
then
\[ |H(P)-H(Q)| \leq -\epsilon \log \frac{\epsilon}{|\mathcal{X}|} \]

%
%
%

\section{Typicality graphs}
Consider any joint distribution $P_X \cdot P_{Y|X}$ on $\mathcal{X}
\times \mathcal{Y}$.
\begin{defi} \label{def:typ_graph}
For any $\epsilon_{1n},\epsilon_{2n},\lambda_n \to 0$,  the sequence
of  typicality graphs $G_n(\epsilon_{1n},\epsilon_{2n},\lambda_n)$
is defined as follows. For every $n$, $G_n$ is a bipartite graph,
with its left vertices consisting of all $x^n \in
T^n_{\epsilon_{1n}}(X)$ and the right vertices consisting of  all
$y^n \in T^n_{\epsilon_{2n}}(Y)$. A vertex on the left (say
$\tilde{x}^n$) is connected to a vertex on the right (say
$\tilde{y}^n$) iff $(\tilde{x}^n,\tilde{y}^n) \in
T^n_{\lambda_n}(X,Y)$.
\end{defi}

\textbf{Remark}. Henceforth, we will assume that the sequences
$\epsilon_{1n},\epsilon_{2n},\lambda{n}$ satisfy the `delta
convention' \cite[Convention 2.11]{ckbook}, i.e.,
\[ \epsilon_{1n} \to 0, \quad \sqrt{n} \cdot \epsilon_{1n} \to \infty \text{ as } n \to \infty \]
with similar conditions for $\epsilon_{2n}$ and $\lambda_n$ as well.
The delta convention ensures that the typical sets have `large
probability'.

We will use the notation $V_X(.),V_Y(.)$ to denote the vertex sets
of any bipartite graph. Some properties of the typicality graph:
\begin{enumerate}
\item From Fact $2$, we know that for any sequence of typicality graphs $\{G_n(\epsilon_{1n},\epsilon_{2n},\lambda_n)\}$, the cardinality of the vertex sets satisfies
\be \left|\frac{1}{n}\log |V_X(G_n)|- H(X)\right|  \leq \epsilon_n,
\quad \left|\frac{1}{n}\log |V_Y(G_n)|- H(Y)\right|  \leq \epsilon_n
\ee for some sequence $\epsilon_n \to 0$.
\item  The degree of each each vertex $i \in V_X(G_n)$ and $j \in V_Y(G_n)$ satisfies
\be \label{eq:deg_ub} \text{degree}(x^n) \leq
2^{n(H(Y|X)+\epsilon_{n})}, \quad \forall  x^n \in V_X(G_n);\qquad
\text{degree}(y^n) \leq 2^{n(H(X|Y)+\epsilon_{n})}, \quad \forall
y^n \in V_Y(G_n) \ee for some $\epsilon_n \to 0$.
 \proof If $x^n \in
T_{\epsilon_{1n}}^n(X)$ and $(x^n,y^n) \in T_{\lambda_n}^n(X,Y)$,
then from Fact $1$(b),  $y^n \in
T_{\epsilon_{1n}+\lambda_n}^n(Y|x^n)$. From the second part of Fact
$2$, we know that there exists a sequence $\epsilon_n \to 0$ such
that
\be \label{eq:cond_typ_ub}
\left|T^n_{\epsilon_{1n}+\lambda_n}(Y|x^n)\right| \leq
2^{n(H(Y|X)+\epsilon_n)} \ee From this we conclude that
$\text{degree}(x^n) \leq 2^{n(H(Y|X)+\epsilon_{n})}, \forall x^n \in
V_X(G_n)$. An identical argument yields $\text{degree}(y^n) \leq
2^{n(H(X|Y)+\epsilon_{n})},\forall y^n \in V_Y(G_n)$.
\end{enumerate}

Property $2$ gives upper bounds on the degree of each vertex in the
typicality graph. Since we have not imposed any relationships
between the typicality constants $\epsilon_{1n},\epsilon_{2n}$ and
$\lambda_n$, in general it cannot be said that the degree of
\emph{every} $X$-vertex (resp. $Y$-vertex) is close to $2^{NH(Y|X)}$
(resp. $2^{NH(X|Y)}$). However, such an assertion holds for
\emph{almost} every vertex in $G_n$ . Specifically, we can show that
the above degree conditions hold for a subgraph with exponentially
the same size as $G_n$.
\begin{prop} \label{prop:typ_prop}
Every sequence of typicality graphs
$G_n(\epsilon_{1n},\epsilon_{2n},\lambda_n)$ has a sequence of
subgraphs $A_n(\epsilon_{1n},\epsilon_{2n},\lambda_n)$ satisfying
the following properties for some $\delta_n \to 0$.
\begin{enumerate}
\item The vertex set sizes  $|V_X(A_n)| \text{ and } |V_Y(A_n)|$, denoted $\theta_X^n$ and $\theta_Y^n$, respectively, satisfy
\[ \left|\frac{1}{n}\log \theta_X^n- H(X)\right|  \leq \delta_n, \quad \left|\frac{1}{n}\log \theta_Y^n- H(Y)\right|  \leq \delta_n \quad \forall n \]
\item The degree of each $X$-vertex $x^n$, denoted $\theta^{'n}(x^n)$ satisfies
\[ \left|\frac{1}{n}\log \theta^{'n}(x^n)- H(Y|X)\right|  \leq \delta_n  \quad \forall x^n \in V_X(A_n). \]
\item The degree of each $Y$-vertex $y^n$, denoted $\theta^{'n}(y^n)$, satisfies
\[ \left|\frac{1}{n}\log \theta^{'n}(y^n)- H(X|Y)\right|  \leq \delta_n \quad \forall y^n \in V_Y(A_n). \]
\end{enumerate}
\end{prop}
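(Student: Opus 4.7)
The plan is to build $A_n$ by an iterative degree-pruning of $G_n$. The upper degree bounds in conclusions (2) and (3) are automatic from Property~2, so the real content is to produce large vertex sets on which the matching \emph{lower} degree bounds also hold. Set thresholds
\begin{equation*}
\tau_X := 2^{n(H(Y|X)-\eta_n)}, \qquad \tau_Y := 2^{n(H(X|Y)-\eta_n)}
\end{equation*}
for a slowly vanishing $\eta_n>0$ to be chosen. Starting from $G_n$, repeatedly delete any $X$-vertex of current degree below $\tau_X$ or any $Y$-vertex of current degree below $\tau_Y$, and let $A_n$ denote the subgraph that remains when no further deletion is possible. By construction every surviving vertex meets the required lower degree bound, and it suffices to verify that $A_n$ still contains nearly the original number of vertices on each side.

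The first step is a lower bound on the initial number of edges. Set $\mu_n := \min\{\lambda_n,\,\epsilon_{1n}/|\mathcal{Y}|,\,\epsilon_{2n}/|\mathcal{X}|\}$. Any $(x^n,y^n)\in T^n_{\mu_n}(X,Y)$ has marginals in $T^n_{\mu_n|\mathcal{Y}|}(X)\subseteq T^n_{\epsilon_{1n}}(X)$ and $T^n_{\mu_n|\mathcal{X}|}(Y)\subseteq T^n_{\epsilon_{2n}}(Y)$ (obtained by summing the empirical joint over one coordinate), and so it is an edge of $G_n$. Since $\mu_n$ still obeys the delta convention, Fact~2 applied to the joint distribution yields $|E(G_n)| \geq |T^n_{\mu_n}(X,Y)| \geq 2^{n(H(X,Y)-\delta'_n)}$ for some $\delta'_n\to 0$.

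The second step is an edge-budget argument. At the moment of its removal each $X$-vertex has current degree strictly below $\tau_X$ and so destroys fewer than $\tau_X$ edges; the analogous statement holds for $Y$-vertices. Letting $k_X,k_Y$ denote the total numbers of $X$- and $Y$-vertices eventually removed,
\begin{equation*}
|E(G_n)|-|E(A_n)| \;\leq\; k_X\tau_X + k_Y\tau_Y \;\leq\; 2^{n(H(X)+\epsilon_n)}\tau_X + 2^{n(H(Y)+\epsilon_n)}\tau_Y \;\leq\; 2\cdot 2^{n(H(X,Y)+\epsilon_n-\eta_n)}.
\end{equation*}
Choosing $\eta_n\to 0$ so that $\eta_n$ dominates $\epsilon_n+\delta'_n$, the right side is exponentially negligible compared to the lower bound on $|E(G_n)|$, giving $|E(A_n)|\geq \tfrac12\cdot 2^{n(H(X,Y)-\delta'_n)}$; in particular $A_n$ is nonempty. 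The required vertex counts now follow by dividing this edge lower bound by the upper degree bound from Property~2: $|V_X(A_n)| \geq |E(A_n)|/2^{n(H(Y|X)+\epsilon_n)} \geq 2^{n(H(X)-\delta_n)}$, and symmetrically for $|V_Y(A_n)|$, for a suitable $\delta_n\to 0$. The crux of the argument is this edge-budget inequality: one cannot bound the number of ``bad'' $Y$-vertices in isolation (their count may be as large as $|T^n(Y)|$), and removing them naively may still destroy the lower degree of a good $X$-vertex; charging every deletion against the total edge budget sidesteps this difficulty entirely.
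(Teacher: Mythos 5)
Your proof is correct but takes a genuinely different route from the paper. The paper constructs $A_n$ explicitly: it approximates $P_{XY}$ by a joint type $\tilde{P}_{XY}$ with denominator $n$, takes $V_X(A_n)=T_0^n(\tilde{P}_X)$, $V_Y(A_n)=T_0^n(\tilde{P}_Y)$, and connects two sequences exactly when their joint type is $\tilde{P}_{XY}$; the vertex counts and two-sided degree bounds then follow from the standard type-counting lemmas of Csisz\'ar--K\"orner together with continuity of entropy. Your argument never identifies $A_n$ explicitly: it is the generic extremal-graph-theoretic ``peel off low-degree vertices'' construction, and the survival of exponentially many vertices is established by charging every deletion to the global edge budget $|E(G_n)| \geq 2^{n(H(X,Y)-\delta'_n)}$ (which you correctly obtain by reducing the joint typicality constant to $\mu_n$ so that the marginals remain typical, a point worth stating since it is not literally one of the listed Facts). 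The paper's approach buys an explicit, type-structured $A_n$ that is reused later (the same device underlies the proof of Proposition~\ref{prop:gen_rates}); your approach buys robustness --- it uses only the aggregate size/degree bounds of $G_n$, not the algebraic structure of types --- and the edge-budget observation you flag at the end (that low-degree $Y$-vertices cannot be bounded or removed in isolation without spoiling $X$-degrees) is exactly the right reason the naive one-shot deletion fails and the amortized accounting is needed. One small bookkeeping note: you should record that the three slack sequences $\epsilon_n$, $\delta'_n$, $\eta_n$ can be collapsed into a single $\delta_n=\max(\epsilon_n,\eta_n,\delta'_n+\epsilon_n+1/n)\to 0$ so that all three conclusions hold with one common sequence as the statement requires.
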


\begin{proof}
The vertex sets $V_X(G_n)$ and $V_Y(G_n)$ are the
$\epsilon_{1n}$-typical and $\epsilon_{2n}$-typical sets of $P_X$
and $P_Y$, respectively. To define the subgraphs $A_n$, we would
like to choose the sequences with type $P_X$ and $P_Y$, respectively
as the vertex sets of the subgraph, with an edge connecting two
sequences if they have joint type $P_{XY}$. However, the values
taken by the joint pmfs $P_{XY}, P_X,P_Y$ may be any real number
between $0$ and $1$, whereas the joint type of two $n$-sequences is
always a rational number(with denominator $n$).  So we choose the
subgraph $A_n$ as follows:
\begin{itemize}
\item For each $n$, approximate the values of $P_{XY}$ to rational numbers with denominator $n$ to obtain pmf $\tilde{P}_{XY}$, respectively.
Clearly $\tilde{P}_{XY}$ is a  valid joint type of length $n$ and
the maximum approximation error is bounded by $\frac{1}{n}$. In
fact, $\forall (x,y)$, we have for all sufficiently large $n$:
    \be |P_{XY} (x,y)- \tilde{P}_{XY} (x,y)| <   \frac{1}{n}  <<  \frac{1}{\sqrt{n}} < \lambda_{n},  \label{eq:xy_approx}\ee
    where the last inequality follows from the delta convention. Using Fact $1$, we also have
    \begin{gather}
    |P_{X} (x)- \tilde{P}_{X} (x)| <  |\mathcal{Y}|\cdot \frac{1}{n}  <<  \frac{1}{\sqrt{n}} < \epsilon_{1n} \label{eq:x_approx}\\
    |P_{Y} (y)- \tilde{P}_{Y} (y)| <  |\mathcal{X}|\cdot \frac{1}{n}  <<  \frac{1}{\sqrt{n}} < \epsilon_{2n} \label{eq:y_approx}
    \end{gather}

\item The left vertex set of $A_n$ is $T^n_0(\tilde{P}_X)$, i.e.,  the set of $x^n$ sequences with type $\tilde{P}_X$. The right vertex set of $A_n$ is $T^n_0(\tilde{P}_Y)$- the set of $y^n$ sequences with type $\tilde{P}_Y$. A vertex in $V_X(A_n)$, say ${a}^n$ is connected to a vertex in
$V_Y(A_n)$, say ${b}^n$ iff $({a}^n,{b}^n) \in
T^n_0(\tilde{P}_{X,Y})$, i.e.,  $({a}^n,{b}^n)$ have joint type
$\tilde{P}_{XY}$.
\end{itemize}

From \eqref{eq:xy_approx},\eqref{eq:x_approx} and
\eqref{eq:y_approx}, we have
\begin{gather*}
T^n_0(\tilde{P}_X) \subset T^n_{\epsilon_{1n}}(P_X), \quad T^n_0(\tilde{P}_Y) \subset T^n_{\epsilon_{2n}}(P_Y) \quad \text{and}\\
 T^n_0(\tilde{P}_{X,Y}) \subset T^n_{\lambda_{n}}(P_{X,Y}).
\end{gather*}
Hence $A_n$ is a subgraph of $G_n$, as required.

From \cite[Lemma 2.3]{ckbook}, we have \be \left|\frac{1}{n}\log
|T^n_0(\tilde{P}_X)|- H(\tilde{P}_X)\right|  \leq \delta_{1n}, \quad
\left|\frac{1}{n}\log |T^n_0(\tilde{P}_Y)|- H(\tilde{P}_Y)\right|
\leq \delta_{2n} \quad \forall n, \ee where
$\delta_{1n}=(n+1)^{-|\mathcal{X}|}$ and
$\delta_{2n}=(n+1)^{-|\mathcal{Y}|}$. Fact $3$ establishes the
continuity of entropy with respect to the probability distribution.
Using Fact $3$ along with \eqref{eq:xy_approx},\eqref{eq:x_approx}
and \eqref{eq:y_approx}, we obtain \be \label{eq:vertex_size}
\left|\frac{1}{n}\log |T^n_0(\tilde{P}_X)|- H({P}_X)\right|  \leq
\delta_{1n}, \quad \left|\frac{1}{n}\log |T^n_0(\tilde{P}_Y)|-
H({P}_Y)\right|  \leq \delta_{2n} \quad \forall n, \ee where we have
reused $\delta_{1n},\delta_{2n}$ with some abuse of notation. This
proves the first property.

We now note that  $x^n \in V_X(A_n)=T^n_0(\tilde{P}_X)$ and $y^n \in
T^n_0(\tilde{P}_{Y|X}|x^n)$ implies a)$(x^n,y^n) \in
T^n_0(\tilde{P}_{X,Y})$ and b)$y^n \in T^n_0(\tilde{P}_Y)=V_Y(A_n)$
(Fact $1$). This implies \be \label{eq:xdeg} \text{degree}(x^n) \geq
|T^n_0(\tilde{P}_{Y|X}|x^n)|, \forall x^n \in V_X(A_n). \ee From
\cite[Lemma 2.5]{ckbook}, we know that \be |T^n_0(\tilde{P}_{Y|X})|
\geq 2^{n(H(\tilde{P}_{Y|X})-\delta_{3n})} \ee where
$\delta_{3n}=|\mathcal{X}||\mathcal{Y}|\frac{\log (n+1)}{n}$. In the
above, $H(\tilde{P}_{Y|X})$ stands for $H(Y|X)$ computed under the
joint distribution $\tilde{P}_{XY}$. Combining this with
\eqref{eq:xdeg}, we get a lower bound on the degree of each $x^n \in
V_X(A_n)$: \be \label{eq:tildebound} \text{degree}(x^n) \geq
2^{n(H(\tilde{P}_{Y|X})-\delta_{3n})} \ee
 From \eqref{eq:xy_approx} and \eqref{eq:x_approx}, one can deduce that $\forall x,y$
 \[ |P_{Y|X} (y|x)- \tilde{P}_{Y|X} (y|x)| < \gamma_n\]
 for some $\gamma_n \to 0$. Combining this with Fact $3$,  \eqref{eq:tildebound} can be written as
\be \text{degree}(x^n) \geq 2^{n(H({P}_{Y|X})-\delta_{3n})}, \ee
where we reuse the symbol $\delta_{3n}$.

Further, \eqref{eq:deg_ub} gives an upper bound on the degree of
each vertex in $G_n$. Hence we have \be \left|\frac{1}{n}\log
\theta^{'n}(x^n)- H(Y|X)\right|  \leq \max(\delta_{3n},\epsilon_n)
\quad \forall x^n \in V_X(A_n) \ee Similarly, we can bound the
degree of each vertex in $V_Y(A_n)$ as \be \left|\frac{1}{n}\log
\theta^{'n}(y^n)- H(X|Y)\right|  \leq \max(\delta_{4n},\epsilon_n)
\quad \forall y^n \in V_Y(A_n) \ee
Finally, we can set
$\delta_n=\max(\delta_{1n},\delta_{2n},\delta_{3n},\delta_{4n},\epsilon_n)$
to complete the proof of the proposition.
\end{proof}

\section{Sub-graphs contained in typicality graphs}
In this section, we study the subgraphs contained in a sequence of
typicality graphs.
\subsection{Subgraphs of general degree} \label{subsec:subgraphsgendegree}
%
%
%
\begin{defi} \label{def:gen_subg}
A sequence of typicality graphs
$G_n(\epsilon_{1n},\epsilon_{2n},\lambda_n)$ is said to contain a
sequence of subgraphs $\Gamma_n$ of rates $(R_X,R_Y, R'_X, R'_Y)$ if
for each $n$, if there exists a sequence $\delta_n \to 0$ such that
\begin{enumerate}
\item
 The vertex sets of the subgraphs have sizes (denoted $\Delta_X^n$ and $\Delta_Y^n$) that satisfy
\[ \left|\frac{1}{n}\log \Delta_X^n- R_X\right|  \leq \delta_n, \quad \left|\frac{1}{n}\log \Delta_Y^n- R_Y\right|  \leq \delta_n, \: \forall n.\]
\item The degree of each vertex  $x^n$ in $V_X(\Gamma_n)$, denoted $\Delta^{'n}(x^n)$ satisfies
\[ \left|\frac{1}{n}\log\Delta^{'n}(x^n)- R'_Y \right| \leq \delta_n, \quad \forall x^n \in V_X(\Gamma_n),\: \forall n . \]
\item The degree of each vertex  $y^n$ in the $V_Y(\Gamma_n)$, denoted $\Delta^{'n}(y^n)$ satisfies
\[ \left|\frac{1}{n}\log\Delta^{'n}(y^n) - R'_X\right| \leq \delta_n, \quad \forall y^n \in V_Y(\Gamma_n),\: \forall n.   \]
\end{enumerate}
\end{defi}
%
%
%
The following proposition gives a characterization of the rate-tuple
of a sequence of subgraphs in the sequence of typicality graphs of
$P_{XY}$.
\begin{prop} \label{prop:gen_rates}
Let $G_n(\epsilon_{1n},\epsilon_{2n},\lambda_n)$ be a sequence of
typicality graphs of $P_{XY}$. Define
\[ \mathcal{R} \triangleq \{ (R_X,R_Y,R'_X,R'_Y):  G_n(\epsilon_{1n},\epsilon_{2n},\lambda_n) \text{ contains subgraphs of rates }
(R_X,R_Y,R'_X,R'_Y)\}  \]Then \be \label{eq:gen_subgr_rates}
\mathcal{R} \supseteq \{(R_X,R_Y,R'_X,R'_Y): R_X \leq H(X|U), \: R_Y
\leq H(Y|U), \: R'_X \leq H(Y|XU) , \: R'_Y \leq H(Y|XU) \text{ for
some } P_{U|XY}.\} \ee
\end{prop}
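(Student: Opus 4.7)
The plan is to prove achievability by constructing, for each fixed conditional distribution $P_{U|XY}$, a subgraph $\Gamma_n$ of $G_n$ whose vertex-size and degree rates realize the ``corner point'' $(H(X|U),H(Y|U),H(X|YU),H(Y|XU))$. The central idea is to use a single template sequence $u^n$: take as left and right vertices of $\Gamma_n$ the sequences that are jointly typical with $u^n$ under $P_{UX}$ and $P_{UY}$ respectively, and declare an edge between $x^n$ and $y^n$ whenever $(u^n,x^n,y^n)$ is jointly typical under the full distribution $P_{UXY}$. Fact~1(a), applied twice, will then automatically ensure that $V_X(\Gamma_n)\subseteq V_X(G_n)$, $V_Y(\Gamma_n)\subseteq V_Y(G_n)$, and that every edge of $\Gamma_n$ is an edge of $G_n$, so that $\Gamma_n$ is indeed a subgraph.

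To obtain exact (rather than only approximate) vertex and degree counts, I would repeat the rational-approximation trick used in the proof of Proposition~\ref{prop:typ_prop}: replace $P_{UXY}$ by a joint type $\tilde P_{UXY}$ of denominator $n$ with $|P_{UXY}(u,x,y)-\tilde P_{UXY}(u,x,y)|<1/n$, pick any fixed $u^n\in T^n_0(\tilde P_U)$, and then set $V_X(\Gamma_n):=\{x^n:(u^n,x^n)\in T^n_0(\tilde P_{UX})\}$, $V_Y(\Gamma_n):=\{y^n:(u^n,y^n)\in T^n_0(\tilde P_{UY})\}$, with $x^n\sim y^n$ iff $(u^n,x^n,y^n)\in T^n_0(\tilde P_{UXY})$. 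The three rate conditions of Definition~\ref{def:gen_subg} then follow from the type-counting bounds~\cite[Lemmas 2.3, 2.5]{ckbook}: $|V_X(\Gamma_n)|=|T^n_0(\tilde P_{X|U}|u^n)|=2^{n(H(\tilde P_{X|U})\pm\delta_n)}$ and similarly for $|V_Y(\Gamma_n)|$; every $x^n\in V_X(\Gamma_n)$ has degree $|T^n_0(\tilde P_{Y|UX}|u^n,x^n)|=2^{n(H(\tilde P_{Y|UX})\pm\delta_n)}$, and analogously each $y^n\in V_Y(\Gamma_n)$ has degree $|T^n_0(\tilde P_{X|UY}|u^n,y^n)|=2^{n(H(\tilde P_{X|UY})\pm\delta_n)}$. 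A final invocation of Fact~3 converts $H(\tilde P_{X|U})$ to $H(X|U)$, $H(\tilde P_{Y|UX})$ to $H(Y|XU)$, and so on, up to an extra $o(1)$ slack absorbed into $\delta_n$.

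The principal subtlety is the quantifier \emph{for every vertex} in Definition~\ref{def:gen_subg}: the degree bound must hold pointwise, not merely on average. Using a single fixed template $u^n$ together with exact-type shells $T^n_0(\cdot)$ is precisely what delivers this uniformity, since the conditional-type size formula of~\cite[Lemma 2.5]{ckbook} is itself a pointwise statement---every sequence in a constant-composition shell has the same number of jointly typical extensions up to polynomial slack. Finally, to reach a general interior point $(R_X,R_Y,R'_X,R'_Y)$ strictly below the corner, I would thin the corner construction: either replace $U$ by a coarser auxiliary $U'$ that realizes the desired values of $H(X|U')$, $H(Y|U')$, $H(X|YU')$, $H(Y|XU')$, or delete vertices and prune edges in a regular (constant-composition) pattern that preserves uniform degrees on the survivors. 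This last step introduces no new ideas beyond those already used to build the corner-point subgraph.
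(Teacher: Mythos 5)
Your proposal matches the paper's argument essentially step for step: fix $P_{U|XY}$, approximate $P_{UXY}$ by a type $\tilde P_{UXY}$ with denominator $n$, fix a template $u^n\in T^n_0(\tilde P_U)$, take the conditional-type shells around $u^n$ as vertex sets, declare edges via the conditional joint type, verify containment in $G_n$ via Fact~1, count sizes and degrees via \cite[Lemmas 2.3, 2.5]{ckbook}, and convert $H(\tilde P_\cdot)$ to $H(P_\cdot)$ via Fact~3. The observation about pointwise degree uniformity being delivered automatically by constant-composition shells is exactly the reason the paper's construction works, and your final ``thinning'' step to reach interior points is the same (equally brief) step the paper takes.
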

\begin{proof}  

\textbf{Definition of $\Gamma_n$}. Consider any conditional
distribution $P_{U|XY}$. This fixes the joint distribution
$P_{XYU}=P_{XY}P_{U|XY}$. We construct $\Gamma_n$ as follows.
\begin{itemize}
\item
For each $n$, approximate the values of $P_{UXY}$ to rational
numbers with denominator $n$ to obtain pmf $\tilde{P}_{UXY}$,
respectively. Clearly $\tilde{P}_{UXY}$ is a  valid joint type of
length $n$ and  the maximum approximation error is bounded by
$\frac{1}{n}$.
    Marginalizing the joint pmf, we also have $\forall x,y$
    \begin{gather}
     |P_{XY} (x,y)- \tilde{P}_{XY} (x,y)| < |\mathcal{U}|\cdot  \frac{1}{n} <<  \frac{1}{\sqrt{n}} < \lambda_{n}, \label{eq:u_xy_approx}\\
    |P_{X} (x)- \tilde{P}_{X} (x)| <  |\mathcal{Y}|\cdot|\mathcal{U}|\cdot \frac{1}{n}  <<  \frac{1}{\sqrt{n}} < \epsilon_{1n} \label{eq:u_x_approx}\\
    |P_{Y} (y)- \tilde{P}_{Y} (y)| <  |\mathcal{X}|\cdot|\mathcal{U}|\cdot \frac{1}{n}  <<  \frac{1}{\sqrt{n}} < \epsilon_{2n}, \label{eq:u_y_approx}
    \end{gather}
    where the last inequality in each equation follows from the delta
    convention. Further $\forall u$
    \be |P_{U} (u)- \tilde{P}_{U} (u)| <  |\mathcal{Y}|\cdot|\mathcal{X}|\cdot \frac{1}{n}. \label{eq:uapprox}\ee

\item
Pick any length $n$ sequence $u^n$  with type $\tilde{P}_{U}$, i.e.,
$u^n \in T_{0}^n(\tilde{P}_U)$. Consider a bipartite graph
$\Gamma_n$ with $X$-vertices consisting of all $x^n \in
T^n_{0}(\tilde{P}_{X|U}|u^n)$, $Y$-vertices consisting of all $y^n
\in T_{0}^n(\tilde{P}_{Y|U}|u^n)$. In other words, having fixed
$u^n$, the $X$-vertex sets and $Y$-vertex sets consist of all length
$n$ sequences having conditional type $\tilde{P}_{X|U}$ and
$\tilde{P}_{Y|U}$, respectively. Vertices $x^n \in V_X(\Gamma_n)$
and $y^n \in V_Y(\Gamma_n)$ are connected in $\Gamma_n$ iff
$(x^n,y^n) \in T^n_0(\tilde{P}_{XY|U}|u^n)$, i.e., if they have the
conditional joint type $P_{XY|U}$ given $u^n$.
\end{itemize}

Let us verify that $\Gamma_n$ is a subgraph of $G_n$. From Fact $1$,
if $u^n \in T_{0}^n(\tilde{P}_U)$ and $x^n \in
T_{0}^n(\tilde{P}_{X|U}|u^n)$, then $(x^n,u^n) \in
T^n_{0}(\tilde{P}_{X,U})$. Consequently, $x^n \in
T^n_{0}(\tilde{P}_{X})$. Similarly,  all $y^n \in
T_{0}^n(\tilde{P}_{Y|U}|u^n)$  belong to $T^n_{0}(\tilde{P}_Y)$.
 On the same lines, if $u^n \in T_{0}^n(\tilde{P}_U)$ and
$(x^n,y^n) \in T_{0}^n(\tilde{P}_{XY|U}|u^n)$, then $(x^n,y^n,u^n)
\in T^n_{0}(\tilde{P}_{X,Y,U})$.  This implies $(x^n,y^n) \in
T^n_{0}(\tilde{P}_{X,Y})$.
Further, from \eqref{eq:u_xy_approx},\eqref{eq:u_x_approx} and
\eqref{eq:u_y_approx}, we know
\begin{gather*}
T^n_0(\tilde{P}_X) \subset T^n_{\epsilon_{1n}}(P_X)=V_{X}(G_n), \quad T^n_0(\tilde{P}_Y) \subset T^n_{\epsilon_{2n}}(P_Y)=V_Y(G_n) \quad \text{and}\\
 T^n_0(\tilde{P}_{X,Y}) \subset T^n_{\lambda_{n}}(P_{X,Y}).
\end{gather*}
Hence for all sufficiently large $n$, $\Gamma_n$ is a subgraph of
the typicality graph $G_n$.

\textbf{Properties of $\Gamma_n$}. From \cite[Lemma 2.3]{ckbook}, we
have \be \label{eq:vertex_size_gamma} \left|\frac{1}{n}\log
|T^n_0(\tilde{P}_{X|U}|u^n)|- H(\tilde{P}_{X|U})\right|  \leq
\delta_{1n}, \quad \left|\frac{1}{n}\log
|T^n_0(\tilde{P}_{Y|U}|u^n)|- H(\tilde{P}_{Y|U})\right|  \leq
\delta_{2n} \quad \forall n, \ee where
$\delta_{1n}=(n+1)^{-|\mathcal{X}||\mathcal{U}|}$ and
$\delta_{2n}=(n+1)^{-|\mathcal{Y}||\mathcal{U}|}$. Using
\eqref{eq:u_x_approx}, \eqref{eq:u_y_approx} with
\eqref{eq:uapprox}, we know that $\tilde{P}_{X|U}, \tilde{P}_{Y|U}$
are close to ${P}_{X|U},{P}_{Y|U}$, respectively. Using Fact $3$, we
know that the entropies $H(\tilde{P}_{X|U}), H(\tilde{P}_{Y|U})$
must close to $H(P_{X|U}),H(P_{Y|U})$, respectively. Thus we can
write \eqref{eq:vertex_size_gamma} as (reusing
$\delta_{1n},\delta_{2n}$) \be \left|\frac{1}{n}\log
|T^n_0(\tilde{P}_{X|U}|u^n)|- H({P}_{X|U})\right|  \leq \delta_{1n},
\quad \left|\frac{1}{n}\log |T^n_0(\tilde{P}_{Y|U}|u^n)|-
H({P}_{Y|U})\right|  \leq \delta_{2n} \quad \forall n, \ee Thus, the
vertex sets of $\Gamma_n$ have rates $R_X=H(X|U)$ and $R_Y=H(Y|U)$,
as required.

Using Fact $1$, for any $x^n \in V_X(\Gamma_n)$, every $y^n \in
T^n_0(\tilde{P}_{Y|XU}|x^n,u^n)$ will satisfy  a) $(x^n,y^n) \in
T^n_0(\tilde{P}_{X,Y|U}|u^n)$ and b) $y^n \in
T^n_0(\tilde{P}_{Y|U}|u^n)$. Hence \be \label{eq:degyxu_lb}
\text{degree}(x^n) \geq |T^n_0(\tilde{P}_{Y|XU}|x^n,u^n)| \geq
2^{n(H(\tilde{P}_{Y|XU})-\delta_{3n})}, \ee where $\delta_{3n} =
|\mathcal{X}||\mathcal{Y}||\mathcal{U}|\frac{\log(n+1)}{n}$. We can
also upper bound the degree of $x^n$ by noting that $x^n \in
T^n_0(\tilde{P}_{X|U}|u^n)$ and $(x^n,y^n) \in
T^n_0(\tilde{P}_{X,Y|U}|u^n)$ implies $y^n \in
T^n_0(\tilde{P}_{Y|XU}|x^n,u^n)$. From \cite[Lemma 2.5]{ckbook},
\[ |T^n_0(\tilde{P}_{Y|XU}|x^n,u^n)| \leq 2^{nH(\tilde{P}_{Y|XU})}.\]
Combining this with \eqref{eq:degyxu_lb}, we have \be
\left|\frac{1}{n}\log\Delta^{'n}(x^n)- H(\tilde{P}_{Y|XU}) \right|
\leq \delta_{3n}, \quad \forall x^n \in V_X(\Gamma_n),\: \forall n .
\ee In a similar fashion, we can show that \be
\left|\frac{1}{n}\log\Delta^{'n}(y^n)- H(\tilde{P}_{X|YU}) \right|
\leq \delta_{4n}, \quad \forall y^n \in V_Y(\Gamma_n),\: \forall n .
\ee
Since the distributions $\tilde{P}_{Y|XU}$ and $\tilde{P}_{X|YU}$
are close to $P_{Y|XU}$ and $P_{X|YU}$, respectively, Fact $3$
enables us to replace $H(\tilde{P}_{Y|XU}), H(\tilde{P}_{X|YU})$
with $H({P}_{Y|XU}), H({P}_{X|YU})$, respectively in the two
preceding equations.

Taking
$\delta_n=\max(\delta_{1n},\delta_{2n},\delta_{3n},\delta_{4n})$, we
have shown the existence of a sequence of subgraphs $\Gamma_n$ with
rates $(H(X|U),H(Y|U),H(Y|XU),H(X|YU))$. Since we can simply exclude
edges from $\Gamma_n$ to obtain subgraphs with smaller rates, it is
clear that all rate tuples characterized by \[ (R_X,R_Y,R'_X,R'_Y):
R_X \leq H(X|U), \: R_Y \leq H(Y|U), \: R'_X \leq H(Y|XU) , \: R'_Y
\leq H(Y|XU)  \] are achievable for every conditional distribution
$P_{U|XY}$.
\end{proof}

\subsection{Nearly complete subgraphs} \label{subsec:subgraphsnearlycomplete}
A complete bipartite graph is one in which each vertex of the first
set is  connected with every vertex on the other set. We next
consider a specific class of subgraphs, namely nearly complete
subgraphs. For this class of subgraphs, we have a converse result
that fully characterizes the set of nearly complete subgraphs
present in any typicality graph.
\begin{defi} \label{def:nc_subg}
A sequence of typicality graphs $G_n(\epsilon_{1n},\epsilon_{2n},\lambda_n)$ is said to contain a sequence of  nearly complete subgraphs $\Gamma_n(\epsilon_{1n},\epsilon_{2n},\lambda_n)$ of rates $(R_X,R_Y)$ if for each $n$, if there exists a sequence $\delta_n \to 0$ 
such that
\begin{enumerate}
\item
The sizes of the vertex sets of the subgraphs, denoted $\Delta_X^n$
and $\Delta_Y^n$, satisfy
\[ \left|\frac{1}{n}\log \Delta_X^n- R_X\right|  \leq \delta_n, \quad \left|\frac{1}{n}\log \Delta_Y^n- R_Y\right|  \leq \delta_n, \: \forall n.\]
\item The degree of each vertex  $x^n$ in the $X$-set, denoted $\Delta^{'n}(x^n)$ satisfies
\[ \frac{1}{n}\log\Delta^{'n}(x^n) \geq R_Y-\delta_n, \quad \forall x^n \in V_X(\Gamma_n),\: \forall n . \]
\item The degree of each vertex  $j$ in the $Y$-set, denoted $\Delta^{'n}_{j}$ satisfies for all n
\[ \frac{1}{n}\log\Delta^{'n}(y^n) \geq R_X-\delta_n, \quad \forall y^n \in V_Y(\Gamma_n),\: \forall n.   \]
\end{enumerate}
\end{defi}
%
%
%
\begin{prop} \label{prop:full_con}
Let $G_n(\epsilon_{1n},\epsilon_{2n},\lambda_n)$ be a sequence of
typicality graphs for $P_{XY}$. Define
\[ \mathcal{R} \triangleq \{ (R_X,R_Y):  G_n(\epsilon_{1n},\epsilon_{2n},\lambda_n) \text{ contains nearly complete subgraphs of rates }(R_X,R_Y)\}  \]Then
\begin{enumerate}
\item
\be \label{eq:subgr_rates} \mathcal{R}\supseteq \{ (R_X,R_Y): R_X
\leq H(X|U), \: R_Y \leq H(Y|U) \text{ for some } P_{U|XY} \text{
s.t. } X-U-Y \}\footnote{$X,U,Y$ form a Markov chain, in that
order.} \ee
\item
For all sequences of nearly complete subgraphs of $G_n$  such that
the sequence $\delta_n$ (in Definition \ref{def:nc_subg})  converges
to $0$  faster than $1/\log n$ (more precisely,
$\delta_n=o(\frac{1}{\log n})$ or $\lim_{n\to \infty}\delta_n \log
n=0$), the rates of the subgraph $(R_X,R_Y)$ satisfy
\[ R_X \leq H(X|U), \: R_Y \leq H(Y|U) \text{ for some } P_{U|XY} \text{ s.t. } X-U-Y \]
\end{enumerate}
\end{prop}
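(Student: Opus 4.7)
My plan is to prove the two parts of the proposition separately, with the main work lying in the converse.

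\textbf{Part 1 (achievability).} I would adapt the auxiliary-variable construction used in the proof of Proposition \ref{prop:gen_rates} to the Markov setting. Fix any $P_{U|XY}$ with $X-U-Y$, form $P_{UXY}$, approximate it by a rational pmf $\tilde{P}_{UXY}$ with denominator $n$, fix $u^n \in T_0^n(\tilde{P}_U)$, take $V_X(\Gamma_n) = T_0^n(\tilde{P}_{X|U}|u^n)$ and $V_Y(\Gamma_n) = T_0^n(\tilde{P}_{Y|U}|u^n)$, and connect $(x^n,y^n)$ iff they have joint conditional type $\tilde{P}_{XY|U}$ given $u^n$. The Fact 1 chain used in Proposition \ref{prop:gen_rates} shows $\Gamma_n \subseteq G_n$, and \cite[Lemma 2.3]{ckbook} together with Fact 3 shows $|V_X| \approx 2^{nH(X|U)}$, $|V_Y| \approx 2^{nH(Y|U)}$. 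The crucial new input is the Markov property: under $X-U-Y$ one has $H(Y|XU) = H(Y|U)$, so each $x^n$ has degree $|T_0^n(\tilde{P}_{Y|XU}|x^n,u^n)| \approx 2^{nH(Y|U)}$, which matches $|V_Y|$ in the exponent and certifies the nearly-complete degree condition; $Y$-vertex degrees are handled symmetrically. Rates strictly below $(H(X|U),H(Y|U))$ follow by random subsampling of the vertex sets, which preserves near-completeness thanks to the Markov structure.

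\textbf{Part 2 (converse).} Given a nearly complete subgraph $\Gamma_n$ of rates $(R_X,R_Y)$, I would pick $(X^n,Y^n)$ uniformly on the edge set $E(\Gamma_n)$. The degree lower bounds force $P(X^n=x^n) = \Delta^{'n}(x^n)/|E(\Gamma_n)| \leq 2^{nR_Y}/2^{n(R_X+R_Y-2\delta_n)} = 2^{-n(R_X-2\delta_n)}$, giving $H(X^n) \geq n(R_X - 2\delta_n)$ and symmetrically $H(Y^n) \geq n(R_Y - 2\delta_n)$. Since $|E(\Gamma_n)| \geq 2^{n(R_X+R_Y-2\delta_n)}$, we have $H(X^n,Y^n) \geq n(R_X+R_Y-2\delta_n)$ and hence $I(X^n;Y^n) \leq O(n\delta_n)$. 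Moreover, every edge is jointly typical, so the empirical distribution of $(X^n,Y^n)$ lies within $\lambda_n$ of $P_{XY}$. I then single-letterize by setting $U_i = (X^{i-1},Y_{i+1}^n)$, picking $J$ uniform on $\{1,\ldots,n\}$ independent of the rest, and defining $X=X_J$, $Y=Y_J$, $U=(U_J,J)$. Standard chain-rule manipulations yield $nH(X|U) \geq H(X^n|Y^n) \geq H(X^n) - I(X^n;Y^n) \geq n(R_X - O(\delta_n))$ and analogously for $H(Y|U)$, while Csisz\'ar's sum identity gives $\sum_i I(X_i;Y_i|X^{i-1},Y_{i+1}^n) \leq I(X^n;Y^n)$, hence $I(X;Y|U) \leq O(\delta_n)$.

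\textbf{Main obstacle.} The delicate step is upgrading these approximate conclusions -- approximate Markov, empirical-distribution marginal, and rate inequalities with $O(\delta_n)$ slack -- into the exact statement of the proposition. My plan is to invoke a Carath\'eodory/support-lemma reduction that replaces the growing alphabet $\mathcal{U}_n$ by one with cardinality depending only on $|\mathcal{X}|$ and $|\mathcal{Y}|$ while preserving the $(X,Y)$-marginal and the functionals $H(X|U), H(Y|U), I(X;Y|U)$. With the parameter space now compact, a convergent subsequence produces a limit $P^*_{XYU^*}$ with marginal $P_{XY}$, with $I(X;Y|U^*)=0$ (exact Markov), and with $H(X|U^*) \geq R_X$, $H(Y|U^*) \geq R_Y$. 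The hypothesis $\delta_n = o(1/\log n)$ enters precisely at this continuity step: Fact 3 bounds the entropy perturbation incurred when replacing the empirical pmf (within $\lambda_n$ of $P_{XY}$) and the approximately-Markov conditionals by their exact counterparts by quantities of order $\delta_n \log(1/\delta_n)$ and $\lambda_n \log(1/\lambda_n)$, and the $o(1/\log n)$ decay guarantees that these corrections vanish in the limit.
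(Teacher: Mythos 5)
Your Part 1 mirrors the paper's proof exactly: both invoke the construction behind Proposition \ref{prop:gen_rates} with a Markov $P_{U|XY}$, and both leave the subsampling step for strictly smaller rates implicit, so no comment is needed there.

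Your converse, however, is a genuinely different argument from the paper's. The paper first restricts to the dominant joint type (to set up a well-defined ``Fano-distribution''), bounds $I(X'^n;Y'^n)$, invokes Ahlswede's wringing lemma to obtain per-letter near-independence after conditioning on $O(\sigma/\delta)$ coordinates, and then applies a strong converse for non-stationary DMCs (from \cite{agustin}) to recover the rate inequalities; the overhead $k\log(2\sigma/\delta)$ from wringing is exactly what forces the hypothesis $\delta_n=o(1/\log n)$. You instead take the uniform Fano-distribution on all edges of $\Gamma_n$, use the degree bounds to control $H(X^n),H(Y^n),H(X^n,Y^n)$ and hence $I(X^n;Y^n)=O(n\delta_n)$, and single-letterize directly with the Cover-style auxiliary $U_i=(X^{i-1},Y_{i+1}^n)$ and a time-sharing index, finishing with Carath\'eodory plus compactness. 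This is a cleaner and more familiar route (distributed source-coding style), and notably it sidesteps the wringing overhead entirely: your chain-rule inequalities incur only $O(\delta_n)$ slack, so the argument appears to go through for \emph{any} $\delta_n\to 0$, giving a formally stronger converse than the proposition states. Your explanation of where $o(1/\log n)$ enters is the one misstep: the Fact-3 continuity corrections you cite are of order $\delta_n\log(1/\delta_n)$ and $\lambda_n\log(1/\lambda_n)$, which already vanish whenever $\delta_n,\lambda_n\to 0$, so they impose no extra rate condition; $o(1/\log n)$ is an artifact of the wringing-based proof, not of yours. Two smaller points worth tightening if you write this up fully: (i) state explicitly the trivial degree upper bound $\deg(x^n)\le\Delta_Y^n\le 2^{n(R_Y+\delta_n)}$ that you are implicitly using alongside the definition's lower bound; (ii) the inequality $\sum_i I(X_i;Y_i\mid X^{i-1},Y_{i+1}^n)\le I(X^n;Y^n)$ follows from the chain rule and nonnegativity alone (expand $I(X^n;Y^n)=\sum_i I(X_i;Y^n\mid X^{i-1})$ and drop terms), so Csisz\'ar's sum identity is not actually load-bearing here.
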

\begin{proof} The first part of the proposition follows directly from Proposition \ref{prop:gen_rates} by choosing $P_{U|XY}$ such that $X-U-Y$ form a Markov chain.
We now prove the converse under the stated assumption that  the
sequence $\delta_n$ satisfies $\lim_{n\to \infty}\delta_n \log n=0$.

Suppose that a sequence of typicality graphs
$G_n(\epsilon_{1n},\epsilon_{2n},\lambda_n)$ contains nearly
complete subgraphs $\Gamma_n$ of rates $R_X,R_Y$. The total number
of edges in $\Gamma_n$ can be lower bounded as \be
\begin{split}
|\text{Edges}(\Gamma_n)| & \geq \Delta_X^n \cdot \text{ minimum degree of a vertex in }V_{X}({\Gamma}_n)\\
& \geq \Delta_X^n \cdot 2^{n(R_Y-\delta_n)}\\
& \geq \Delta_X^n \cdot 2^{n(R_Y-\delta_n)} \Delta_Y^n \cdot 2^{-n(R_Y+\delta_n)} \\
&= \Delta_X^n \cdot \Delta_Y^n \cdot 2^{-2n\delta_n}.
\end{split}
\ee Each of these edges represent a pair $(x^n,y^n)$ that is jointly
$\lambda_n$-typical with respect to the distribution $P_{XY}$. In
other words, each of these pairs $(x^n,y^n)$ belongs to a joint
type\cite{ckbook} that is `close' to $P_{XY}$. Since the number of
joint types of a pair of sequences of length $n$ is at most
$(n+1)^{|\mathcal{X}||\mathcal{Y}|}$, the number of edges belonging
to the dominant joint type, say $\bar{P}_{XY}$ satisfies
 \be \label{eq:an_size1}
 |\text{Edges}(\Gamma_n) \text{ having joint type
}{\bar{P}_{XY}}| \geq \frac{\Delta_X^n \cdot \Delta_Y^n
2^{-2n\delta_n}}{(n+1)^{|\mathcal{X}||\mathcal{Y}|}}. \ee
Define a subgraph $\mathcal{A}_n$ of $\Gamma_n$ consisting only of
the edges having joint type $\bar{P}_{XY}$. A word about the
notation used in the sequel: We will use $i,j$  to index the
vertices in $V_X(\Gamma_n), V_Y(\Gamma_n)$, respectively. Thus $i
\in \{1,\ldots,\Delta_X^n\}$ and $j \in \{1,\ldots,\Delta_Y^n\}$.
The actual sequences corresponding to these vertices will be denoted
$x^n(i),y^n(j)$ etc. Using this notation,
\be %
\mathcal{A}_n \triangleq \{(i,j): i \in V_X(\Gamma_n), j \in
V_Y(\Gamma_n)
\text{ s.t.  $(x^n(i),y^n(j))$ has joint type } \bar{P}_{XY}%
\ee
From \eqref{eq:an_size1},
\be \label{eq:an_size}%
|\mathcal{A}_n| \geq \frac{\Delta_X^n \cdot \Delta_Y^n
2^{-2n\delta_n}}{(n+1)^{|\mathcal{X}||\mathcal{Y}|}} %
\ee
We will prove the converse result using a series of lemmas
concerning $\mathcal{A}_n$. Some of the lemmas are similar to those
required to prove in \cite[Theorem 1]{nazari08}. We only sketch the
proofs of such lemmas, referring the reader to \cite{nazari08} for
details.

 Define random variables $X'^n,Y'^n$ with pmf
\be%
\text{Pr}((X'^n, Y'^n)=(x^n(i),y^n(j))=\frac{1}{|\mathcal{A}_n|}, \:
\text{ if } (i,j)\in \mathcal{A}_n.%
\ee
\begin{lemma} \label{lem:ixybound}
$I(X'^n;Y'^n)\leq 2n\delta_n + |\mathcal{X}||\mathcal{Y}|\log(n+1).$
\end{lemma}
\proof Follow steps  similar to the proof of \cite[Lemma
1]{nazari08}, using \eqref{eq:an_size} to lower bound the size of
$\mathcal{A}_n$.
%

The next lemma is Ahlswede's version of the `wringing' technique.
Roughly speaking, if  it is known that the mutual information
between two random sequences is small, then the lemma gives an upper
bound on the per-letter mutual information terms (conditioned on
some values).
\begin{lemma}\cite{Ahls82} \label{lem:wringing}
Let $A^n$, $B^n$ be RV's with values in $\mathcal{A}^n$,
$\mathcal{B}^n$ resp. and assume that
\begin{equation*}
I(A^n ; B^n) \leq \sigma
\end{equation*}
Then, for any $0 < \delta < \sigma$ there exist $t_1,t_2,...,t_k \in
\{1,...,n\}$ where $0 \leq k < \frac{2 \sigma}{\delta}$ such that
for some
$\bar{a}_{t_1},\bar{b}_{t_1},\bar{a}_{t_2},\bar{b}_{t_2},...,\bar{a}_{t_k},\bar{b}_{t_k}$
\begin{align}
I(A_t;B_t|A_{t_1}=\bar{a}_{t_1},B_{t_1}=\bar{b}_{t_1},...,A_{t_k}=\bar{a}_{t_k},B_{t_k}=\bar{b}_{t_k})
\leq \delta \qquad \text{for } t=1,2,...,n
\end{align}
and
\begin{flalign}
Pr(A_{t_1}=\bar{a}_{t_1},B_{t_1}=\bar{b}_{t_1},...,A_{t_k}=\bar{a}_{t_k},B_{t_k}=\bar{b}_{t_k})\geq
(\frac{\delta}{|\mathcal{A}||\mathcal{B}|(2 \sigma-\delta)})^k.
\end{flalign}
\end{lemma}
\end{proof}

In our case, we will apply Lemma \ref{lem:wringing} to random
variables $X'^n$ and $Y'^n$. Lemma \ref{lem:ixybound} indicates
$\sigma= 2n\delta_n + |\mathcal{X}||\mathcal{Y}|\log(n+1)$, and
$\delta$ shall be specified later.
Hence we have that for some
\[
k \leq \frac{2\sigma}{\delta} = \frac{2(n\delta_n +
|\mathcal{X}||\mathcal{Y}|\log(n+1))}{\delta},
\]
there exist
$\bar{x}_{t_1},\bar{y}_{t_1},\bar{x}_{t_2},\bar{y}_{t_2},...,\bar{x}_{t_k},\bar{y}_{t_k}$
such that \be I(X'_t;
Y'_t|X'_{t_1}=\bar{x}_{t_1},Y'_{t_1}=\bar{y}_{t_1},...,X'_{t_k}=\bar{x}_{t_k},Y'_{t_k}=\bar{y}_{t_k})
\leq \delta \qquad \text{for } t=1,2,...,n. \ee We now define a
subgraph of $\mathcal{A}_n$ consisting of all edges $(X'^n,Y'^n)$
that have
\[X'_{t_1}=\bar{x}_{t_1},Y'_{t_1}=\bar{y}_{t_1},...,X'_{t_k}=\bar{x}_{t_k},Y'_{t_k}=\bar{y}_{t_k} \]
The subgraph denoted as $\mathcal{\bar{A}}_n$  is given by:
\footnote{The heirarchy of subgraphs is $G_n \supset \Gamma_n
\supset \mathcal{A}_n \supset \mathcal{\bar{A}}_n$} \be
\mathcal{\bar{A}}_n \triangleq \{ (i,j) \in \mathcal{A}_n:
X'_{t_1}(i)=\bar{x}_{t_1},Y'_{t_1}(j)=\bar{y}_{t_1},...,X'_{t_k}(i)=\bar{x}_{t_k},Y'_{t_k}(j)=\bar{y}_{t_k}.
\}  \ee
On the same lines as \cite[Lemma 3]{nazari08}, we have
\begin{align}\label{formfirst}
|\bar{\mathcal{A}_n}| \geq
(\frac{\delta}{|\mathcal{X}||\mathcal{Y}|(2 \sigma-\delta)})^k
|\mathcal{A}_n|.
\end{align}

Define random variables $\bar{X}^n$, $\bar{Y}^n$ on $\mathcal{X}^n$
resp. $\mathcal{Y}^n$ by
\begin{equation}
Pr((\bar{X}^n,\bar{Y}^n)=(x^n(i),y^n(j))=\frac{1}{|\bar{\mathcal{A}}_n|}
\text{if } (i,j) \in \bar{\mathcal{A}}_n.
\end{equation}
If we denote $\bar{X}^n=(\bar{X}_1,...,\bar{X}_n)$,
$Y^n=(\bar{Y}_1,...,\bar{Y}_n)$, the Fano-distribution of the graph
$\bar{\mathcal{A}}_n$ induces a distribution
$P_{\bar{X}_t,\bar{Y}_t}$ on the random variables
$\bar{X}_t,\bar{Y}_t, t=1,\ldots,n.$ One can show that \be
\label{eq:bardash} P(\bar{X}_t=x,\bar{Y}_t=y) =
P({X}'_t=x,\bar{Y}'_t=y|X'_{t_1}(i)=\bar{x}_{t_1},Y'_{t_1}(j)=\bar{y}_{t_1},...,X'_{t_k}(i)=\bar{x}_{t_k},Y'_{t_k}(j)=\bar{y}_{t_k}),
\: \forall t. \ee

Using \eqref{eq:bardash} in Lemma \ref{lem:wringing}, we get the
bound $I(\bar{X}_t;\bar{Y}_t)< \delta$. Applying Pinsker's
inequality for I-divergences \cite{fedTop03}, we have
\begin{flalign} \label{eq:nearly_ind}
\sum_{x,y}
|&Pr(\bar{X}_t=x,\bar{Y}_t=y)-Pr(\bar{X}_t=x)Pr(\bar{Y}_t=y)|\leq 2
\delta^{1/2}, \quad 1 \leq t \leq n.
\end{flalign}

Also define
\begin{subequations}
\begin{flalign}
&\mathcal{\bar{C}}(i)=\{(i,j):(i,j) \in \mathcal{\bar{A}}_n, 1 \leq j \leq \Delta_Y^n\}. \\
&\mathcal{\bar{B}}(j)=\{(i,j):(i,j) \in \mathcal{\bar{A}}_n, 1 \leq
i \leq \Delta_X^n\}.
\end{flalign}
\end{subequations}

We are now ready to present the final lemma required to complete the
proof of the converse.
\begin{lemma}
\label{lem:sizeXY}
\begin{gather*}
R_X \leq \frac{1}{n}\sum_{t=1}^{n} H(\bar{X_t}|\bar{Y_t})+ \delta_{1n}\\
R_Y \leq \frac{1}{n}\sum_{t=1}^{n} H(\bar{Y_t}|\bar{X_t})+ \delta_{2n}\\
R_X +R_Y \leq \frac{1}{n}\sum_{t=1}^{n} H(\bar{X_t}\bar{Y_t})+ +
\delta_{3n}
\end{gather*}
 for some $\delta_{1n},\delta_{2n},\delta_{3n} \to 0$ and the distributions of the RV's are determined by the
Fano-distribution on the codewords $\{(x^n(i),y^n(j)):
(i,j) \in \bar{\mathcal{A}}_n\}$. 
\end{lemma}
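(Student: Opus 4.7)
The plan is to exploit that $(\bar{X}^n,\bar{Y}^n)$ is uniformly distributed on $\bar{\mathcal{A}}_n$, so $H(\bar{X}^n,\bar{Y}^n)=\log|\bar{\mathcal{A}}_n|$, and then to single-letterize by subadditivity. Three ingredients drive the argument: (i) a lower bound $\log|\bar{\mathcal{A}}_n|\geq n(R_X+R_Y)-o(n)$; (ii) the marginal upper bounds $H(\bar{X}^n)\leq \log \Delta_X^n \leq n(R_X+\delta_n)$ and $H(\bar{Y}^n)\leq \log \Delta_Y^n \leq n(R_Y+\delta_n)$, which follow immediately from Definition~\ref{def:nc_subg} together with the fact that $\bar{X}^n,\bar{Y}^n$ are supported on the vertex sets of $\Gamma_n$; and (iii) the standard chain rule together with ``conditioning reduces entropy.''

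First I would establish (i) by chaining \eqref{eq:an_size} with \eqref{formfirst}. Taking logarithms and using $\log \Delta_X^n\geq n(R_X-\delta_n)$, $\log \Delta_Y^n\geq n(R_Y-\delta_n)$ gives
\[
\log|\bar{\mathcal{A}}_n| \;\geq\; n(R_X+R_Y) \;-\; 4n\delta_n \;-\; |\mathcal{X}||\mathcal{Y}|\log(n+1) \;-\; k\log\frac{|\mathcal{X}||\mathcal{Y}|(2\sigma-\delta)}{\delta}.
\]
Choosing $\delta$ to be a small positive constant makes $k=O(n\delta_n+\log n)$ and $\sigma=O(n\delta_n+\log n)$, so the wringing loss is $O\bigl((n\delta_n+\log n)\log n\bigr)$. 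Under the hypothesis $\delta_n\log n\to 0$ of Proposition~\ref{prop:full_con}(2), this is $o(n)$, which delivers (i).

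For the first inequality of the lemma, I would then write
\[
nR_X - o(n) \;\leq\; \log|\bar{\mathcal{A}}_n| - H(\bar{Y}^n) \;=\; H(\bar{X}^n\mid\bar{Y}^n) \;=\; \sum_{t=1}^n H(\bar{X}_t\mid \bar{X}^{t-1},\bar{Y}^n) \;\leq\; \sum_{t=1}^n H(\bar{X}_t\mid\bar{Y}_t),
\]
where the final inequality uses conditioning reduces entropy. Dividing by $n$ produces the $R_X$ bound with a suitable $\delta_{1n}\to 0$; the $R_Y$ bound is identical by symmetry in the roles of $\bar{X}^n$ and $\bar{Y}^n$. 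For the sum-rate inequality, I would apply subadditivity of joint entropy directly, $\log|\bar{\mathcal{A}}_n|=H(\bar{X}^n,\bar{Y}^n)\leq \sum_t H(\bar{X}_t,\bar{Y}_t)$, and combine with (i) to absorb the $o(n)$ slack into $\delta_{3n}$.

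The main obstacle is bookkeeping the wringing loss, rather than the chain-rule manipulations themselves. The factor $\bigl(\delta/[|\mathcal{X}||\mathcal{Y}|(2\sigma-\delta)]\bigr)^k$ from \eqref{formfirst} contributes an additive term of order $k\log(\sigma/\delta)$ to $\log|\bar{\mathcal{A}}_n|$. Since $\sigma$ is itself of order $n\delta_n+\log n$, demanding that this loss be $o(n)$ is essentially the content of the hypothesis $\delta_n=o(1/\log n)$, and this is the sole place in the argument where that hypothesis is consumed.
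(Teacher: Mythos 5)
Your proof is correct but takes a genuinely different, and more elementary, route than the paper's. The paper reaches the three bounds by invoking Augustin's strong converse for non-stationary DMCs \eqref{eq:strong_con} on noiseless channels—once with $\bar{Y}^n$ fixed at $y^n(j)$ to get the per-vertex bound \eqref{eq:bjsize}, once symmetrically, and once jointly for $\log|\bar{\mathcal{A}}_n|$—and then converts those per-vertex bounds into bounds on $\log\Delta_X^n$, $\log\Delta_Y^n$ via the $B^*$ thresholding and averaging argument of \eqref{eq:int_sum}--\eqref{23}. You instead exploit that the Fano-distribution is uniform on $\bar{\mathcal{A}}_n$, so $\log|\bar{\mathcal{A}}_n|=H(\bar{X}^n,\bar{Y}^n)$ \emph{exactly}; combined with the lower bound $\log|\bar{\mathcal{A}}_n|\geq n(R_X+R_Y)-o(n)$ (which both proofs derive identically from \eqref{eq:an_size} and \eqref{formfirst}) and the trivial support bounds $H(\bar{X}^n)\leq\log\Delta_X^n$, $H(\bar{Y}^n)\leq\log\Delta_Y^n$, the chain rule and ``conditioning reduces entropy'' deliver all three single-letter inequalities directly. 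This buys a shorter argument with no appeal to an external strong-converse result and no $B^*$ machinery, and it even sidesteps the $n^{1/2}$ residuals. The one point of friction is cosmetic rather than a gap: you set the wringing parameter $\delta$ to a constant, whereas the paper chooses $\delta=(\delta_n\log n)^{1/2}\to 0$ because the downstream step \eqref{eq:nearly_ind} needs $\delta\to 0$ for the approximate independence of $\bar{X}_t,\bar{Y}_t$. Since $\bar{\mathcal{A}}_n$, and hence the Fano random variables, depend on $\delta$, you should keep the paper's choice so the Lemma feeds cleanly into the rest of Proposition~\ref{prop:full_con}; your entropy accounting still closes under that choice once the wringing loss is controlled as in \eqref{eq:asymp}--\eqref{eq:asymp1}, which is exactly where the hypothesis $\delta_n\log n\to 0$ is spent, as you correctly note.
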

\begin{proof}
We use a strong converse result for non-stationary discrete
memoryless channels, found in \cite{agustin}. Consider a DMC with
input $A_t$ and output $B_t$ $(t=1,\ldots,n)$, with average error
probability $\lambda \: (0 \leq \lambda<1)$. The result states that
the size of the message set $M$ is upper-bounded as \be
\label{eq:strong_con} \log M < \sum_{t=1}^n
I(A_t;B_t)+\frac{3}{1-\lambda}|\mathcal{A}|n^{1/2}, \ee where the
distributions of the RV's are determined by the Fano-distribution on
the codewords.

We apply the above result to three noiseless DMCs ($B_t=A_t,
\lambda=0$) as follows. Fix $\bar{Y}^n=y^n(j)$ for some $j \in
\bar{\mathcal{A}}_n$ and let the input be $\bar{X}_t, \:
t=1,\cdots,n$. Then, from \eqref{eq:strong_con} we have \be
\label{eq:bjsize} \log |\bar{\mathcal{B}}(j)| \leq \sum_{t=1}^{n}
H(\bar{X}_t|\bar{Y}_t={y}_{t}(j)) + {3}|\mathcal{X}|n^{1/2}. \ee
Similarly,
\begin{gather}
\log |\bar{\mathcal{C}}(i)| \leq \sum_{t=1}^{n} H(\bar{Y}_t|\bar{X}_t={x}_{t}(i)) + {3}|\mathcal{Y}|n^{1/2},\\
\log |\bar{\mathcal{A}}_n| \leq \sum_{t=1}^{n} H(\bar{X}_t\bar{Y}_t)
+ {3}|\mathcal{X}||\mathcal{Y}|n^{1/2}. \label{eq:jointsize}
\end{gather}
Noting that $Pr(\bar{Y_t}=y)=|\bar{\mathcal{A}}|^{-1}
\sum_{\substack{(i,j) \in \bar{\mathcal{A}}_n}} 1_{\{y_{t}(j),y\}}$,
we can sum both sides of \eqref{eq:bjsize}  over all $(i,j) \in
\bar{\mathcal{A}}_n$ to obtain \be \label{eq:sum_bj_bound}
|\bar{\mathcal{A}}_n|^{-1} \sum_{(i,j) \in \bar{\mathcal{A}}_n} \log
|\bar{\mathcal{B}}(j)| \leq \sum_{t=1}^{n} H(\bar{X}_t|\bar{Y}_t)
+{3}|\mathcal{X}|n^{1/2}. \ee
Define
\begin{equation}
B^* \triangleq \frac{2^{-2n\delta_n}}{n}
\frac{\Delta_X^n}{(n+1)^{|\mathcal{X}||\mathcal{Y}|}}(\frac{\delta}{|\mathcal{X}||\mathcal{Y}|(2
\sigma-\delta)})^k.
\end{equation}
Then,
\begin{align} \label{eq:int_sum}
|\bar{\mathcal{A}}_n|^{-1} \sum_{(i,j) \in \bar{\mathcal{A}}_n} \log
|\bar{\mathcal{B}}(j)|&= |\bar{\mathcal{A}}_n|^{-1} \sum_{j}
|\bar{\mathcal{B}}(j)| \log |\bar{\mathcal{B}}(j)| \nonumber\\&\geq
|\bar{\mathcal{A}}_n|^{-1} \sum_{\substack{j:|\bar{\mathcal{B}}(j)|
\geq B^*}}|\bar{\mathcal{B}}(j)| \log
|\bar{\mathcal{B}}(j)|\nonumber\\
&\geq |\bar{\mathcal{A}}_n|^{-1} \log(B^*)
\sum_{\substack{j:|\bar{\mathcal{B}}(j)| \geq B^*}}
|\bar{\mathcal{B}}(j)|\nonumber\\
&\geq |\bar{\mathcal{A}}_n|^{-1} \log(B^*)(|\bar{\mathcal{A}}_n|-
\Delta_Y^n B^*).
\end{align}
Combining \eqref{formfirst}, \eqref{eq:an_size} and the definition
of $B^*$, we also have
\begin{equation}
\Delta_Y^n B^* \leq \frac{1}{n} |\mathcal{\bar{A}}_n|.
\end{equation}
Using this in \eqref{eq:int_sum}, we have
\begin{flalign}
|\bar{\mathcal{A}_n}|^{-1} \sum_{(i,j) \in \bar{\mathcal{A}}_n} \log
|\bar{\mathcal{B}}(j)| & \geq |\bar{\mathcal{A}_n}|^{-1} \log(B^*)(|\bar{\mathcal{A}_n}|-\frac{1}{n}|\bar{\mathcal{A}_n}|)\nonumber\\
& = (1-\frac{1}{n}) \log(\frac{2^{-2n\delta_n}}{n}
\frac{\Delta_X^n}{(n+1)^{|\mathcal{X}||\mathcal{Y}|}}(\frac{\delta}{|\mathcal{X}||\mathcal{Y}|(2
\sigma-\delta)})^k).\label{23}
\end{flalign}
Using \eqref{eq:sum_bj_bound} in the above we have \be
\label{eq:delx_size} \log \Delta_X^n \leq
\frac{n}{n-1}(\sum_{t=1}^{n} H(\bar{X_t}|\bar{Y_t})
+{3}|\mathcal{X}|n^{1/2})+2n\delta_n+\log n +
|\mathcal{X}||\mathcal{Y}| \log(n+1)+k
\log(\frac{|\mathcal{X}||\mathcal{Y}| 2 \sigma}{\delta}) \ee
Analogously, \be\label{eq:dely_size} \log \Delta_Y^n \leq
\frac{n}{n-1}(\sum_{t=1}^{n} H(\bar{Y_t}|\bar{X_t})
+{3}|\mathcal{Y}|n^{1/2})+2n\delta_n+\log n +
|\mathcal{X}||\mathcal{Y}| \log(n+1)+k
\log(\frac{|\mathcal{X}||\mathcal{Y}| 2 \sigma}{\delta}) \ee Next,
we find an upper bound for  $\log \Delta_X^n \Delta_Y^n $. From
\eqref{formfirst}, we get
\begin{flalign}
\log |\bar{\mathcal{A}}_n| & \geq \log |\mathcal{A}_n|+ k
\log(\frac{\delta}{|\mathcal{X}||\mathcal{Y}|(2 \sigma-\delta)})\nonumber\\
&\geq \log |\mathcal{A}_n|+ k
\log(\frac{\delta}{|\mathcal{X}||\mathcal{Y}|2 \sigma})\nonumber\\
&= \log |\mathcal{A}_n|- k \log(\frac{2 \sigma}{\delta})-k\log(|\mathcal{X}||\mathcal{Y}|) \nonumber \\
&\stackrel{(a)}{\geq} \log(\Delta_X^n \Delta_Y^n) -
|\mathcal{X}||\mathcal{Y}| \log(n+1) -2n\delta_n -k
\log(\frac{|\mathcal{X}||\mathcal{Y}|2\sigma}{\delta}),
\end{flalign}
where $(a)$ is obtained by using \eqref{eq:an_size}. Using
\eqref{eq:jointsize}, the above inequality becomes \be
\label{eq:delxdely_size} \log(\Delta_X^n \Delta_Y^n) \leq
\sum_{t=1}^{n} H(\bar{X}_t\bar{Y}_t) +
{3}|\mathcal{X}||\mathcal{Y}|n^{1/2}+ |\mathcal{X}||\mathcal{Y}|
\log(n+1)+ 2n\delta_n + k \log(\frac{2\sigma}{\delta})+k
\log(|\mathcal{X}||\mathcal{Y}|) \ee Using the lower bounds on the
sizes of $\Delta_X, \Delta_Y$ from \ref{def:nc_subg}, we can rewrite
\eqref{eq:delx_size},\eqref{eq:dely_size} and
\eqref{eq:delxdely_size} as
\begin{gather}
R_X-\delta_n \leq \frac{1}{n-1}\sum_{t=1}^{n} H(\bar{X_t}|\bar{Y_t})
+{3|\mathcal{X}|}\frac{n^{1/2}}{n-1}+2\delta_n + \frac{\log n + |\mathcal{X}||\mathcal{Y}| \log(n+1)}{n} +\frac{k}{n}\log(\frac{2|\mathcal{X}||\mathcal{Y}|\sigma}{\delta}) \label{eq:rx_size}\\
R_Y -\delta_n \leq \frac{1}{n-1}\sum_{t=1}^{n}
H(\bar{Y_t}|\bar{X_t})
+{3|\mathcal{Y}|}\frac{n^{1/2}}{n-1}+2\delta_n+ \frac{\log n +
|\mathcal{X}||\mathcal{Y}| \log(n+1)}{n}
+\frac{k}{n} \log(\frac{2|\mathcal{X}||\mathcal{Y}| \sigma}{\delta})\\
R_X + R_Y -2\delta_n \leq \frac{1}{n}\sum_{t=1}^{n}
H(\bar{X}_t\bar{Y}_t) +
{3}|\mathcal{X}||\mathcal{Y}|\frac{n^{1/2}}{n-1}+
|\mathcal{X}||\mathcal{Y}|\frac{\log(n+1)}{n}+ 2\delta_n +
\frac{k}{n} \log(\frac{2|\mathcal{X}||\mathcal{Y}|\sigma}{\delta})
\label{rxry_size}
\end{gather}
For our proof we would like all the terms on the right hand side of
the above equations (except the entropies) to converge to 0 as $n\to
\infty$. This will happen if
\[ \frac{k}{n}\log(\frac{2\sigma}{\delta}) \to 0. \]
Recall from Lemma \ref{lem:ixybound} that $\sigma= 2n\delta_n +
|\mathcal{X}||\mathcal{Y}|\log(n+1)$ and $k<\frac{2\sigma}{\delta}$.
Hence we need to choose $\delta$ such that \be  \label{eq:asymp}
\frac{2\sigma}{n\delta}\log(\frac{2\sigma}{\delta})\sim
 \frac{\delta_n + \frac{\log n}{n}}{\delta} \left(\log(n\delta_n+\log n)-\log\delta\right) \to 0.
\ee
>From our assumption in the beginning, we have  $ \delta_n \log n \to 0$. Set
\be \label{eq:delta_value} \delta= (\delta_n \log n)^{1/2} \ee We
see that asymptotically, \eqref{eq:asymp} becomes \be
\label{eq:asymp1} \frac{\delta_n^{1/2}}{(\log n)^{1/2}}\left[\log
(n\delta_n+\log n) -\log(\delta_n^{1/2})-\log \log n\right] \ee We
separately consider each of the terms in the equation above
\begin{enumerate}
\item If $\log (n\delta_n+\log n) \sim \log (n\delta_n)$ for large $n$, then
\be
\begin{split}
\frac{\delta_n^{1/2}}{(\log n)^{1/2}}\log(n\delta_n+\log n) &\sim \frac{\delta_n^{1/2}}{(\log n)^{1/2}}\log (n\delta_n) =\frac{\delta_n^{1/2}}{(\log n)^{1/2}}\left[\log n+ \log \delta_n\right]\\
&=(\delta_n\log n)^{1/2}+\frac{\delta_n^{1/2}\log \delta_n}{(\log
n)^{1/2}} \to 0, \text{ since } \delta_n \to 0.
\end{split}
\ee
%
If $\log (n\delta_n+\log n) \sim \log (\log n)$ for large $n$, then
\be \frac{\delta_n^{1/2}}{(\log n)^{1/2}}\log(n\delta_n+\log n) \sim
\frac{\delta_n^{1/2}}{(\log n)^{1/2}}\log (\log n) \to 0. \ee
\item $\frac{\delta_n^{1/2}}{(\log n)^{1/2}} \log(\delta_n^{1/2})  \to 0 \text { because } x \log x \to 0 \text{ when } x \to 0.$
\item $\frac{\delta_n^{1/2}}{(\log n)^{1/2}} \log \log n = (\delta_n \log n)^{1/2}\frac{\log \log n}{\log n} \to 0.$
\end{enumerate}
Hence the term in \eqref{eq:asymp1} converges to $0$ as $n \to
\infty$, completing the proof of the lemma.
\end{proof}

We can rewrite Lemma \ref{lem:sizeXY} using new variables
$\bar{X},\bar{Y},Q$, where $Q=t \in \{1,2,...,n\}$ with probability
$\frac{1}{n}$ and $P_{\bar{X},\bar{Y}|Q=t}=P_{\bar{X}_t,\bar{Y}_t}$.
So we now have (for all sufficiently large $n$),
\begin{gather}
R_X \leq H(\bar{X}|\bar{Y},Q)+ \delta_{1n} \label{eq:rx_q}\\
R_Y \leq  H(\bar{Y}|\bar{X}, Q)+ \delta_{2n}\\
R_X + R_Y \leq  H(\bar{X},\bar{Y}|Q) + \delta_{3n}, \label{eq:rxy_q}
\end{gather}
for some $\delta_{1n},\delta_{2n}, \delta_{3n} \to 0$.

Finally, using \eqref{eq:nearly_ind}, we also have
\begin{equation}
\begin{split}
&|Pr(\bar{X}=x,\bar{Y}=y|Q=t)-Pr(\bar{X}=x|Q=t)Pr(\bar{Y}=y|Q=t)|\\
&=|Pr(\bar{X}_t=x,\bar{Y}_t=y)-Pr(\bar{X}_t=x)Pr(\bar{Y}_t=y)| \\
& \leq 2\delta^{1/2} = 2(\delta_n \log n)^{1/4} \to 0 \text{ as } n
\to \infty.
\end{split}
\end{equation}
In other words, for all $t$, $\bar{X}_t,\bar{Y}_t$ are almost
independent for large $n$. Consequently, using the continuity of
mutual information with respect to the joint distribution, Lemma
\ref{lem:sizeXY} holds with for any joint distribution
$P_{Q}P_{\bar{X}|Q}P_{\bar{Y}|Q}$ such that the marginal on
$(\bar{X},\bar{Y})$ is $P_{\bar{X},\bar{Y}}$. Recall that
$P_{\bar{X},\bar{Y}}$ is the dominant joint type that is
$\lambda_n$-close to $P_{X,Y}$. Using suitable continuity arguments,
we can now argue that Lemma \ref{lem:sizeXY} holds with for any
joint distribution $P_{Q}P_{{X}|Q}P_{{Y}|Q}$ such that the marginal
on $({X},{Y})$ is $P_{{X},{Y}}$, completing the proof of the
converse.

\subsection{Nearly Empty Subgraphs} \label{subsec:subgraphsnearlyempty}

So far, we have discussed properties of subgraphs of the typicality
graph $G_n(\epsilon_{1n}, \epsilon_{2n}, \lambda_n)$ such as the
containment of nearly complete subgraphs and subgraphs of general
degree. Now, we turn our attention to the presence of nearly empty
subgraphs in the typicality graph. Our approach towards this problem
differs slightly from the approach we took in Sections
\ref{subsec:subgraphsgendegree} and
\ref{subsec:subgraphsnearlycomplete}. While in these sections we
characterized the subgraphs based on the degrees of their vertices,
in this section we would characterize nearly empty subgraphs by the
total number of edges present in such graphs. To effect this
characterization, we take a different approach than the one used in
previous sections and analyze the probability that a randomly chosen
subgraph of the typicality graph has far fewer edges than expected.
In particular, we focus attention on the case when the random
subgraph has no edges.

Consider a pair $(X,Y)$ of discrete memoryless stationary correlated
sources with finite alphabets $\mathcal{X}$ and $\mathcal{Y}$
respectively. Suppose we sample $2^{nR_1}$ sequences from the
typical set $T_{\epsilon_{1n}}^n(X)$ of $X$ independently with
replacement and similarly sample $2^{nR_2}$ sequences from the
typical set $T_{\epsilon_{2n}}^n(Y)$ of $Y$. The underlying
typicality graph $G_n(\epsilon_{1n},\epsilon_{2n},\lambda_{n})$
induces a bipartite graph on these $2^{nR_1} + 2^{nR_2}$ sequences.
We provide a characterization of the probability that this graph is
sparser than expected. This characterization is obtained through the
use of a version of Suen's inequalities \cite{janson} and the Lovasz
local lemma \cite{alon-spencer} listed below.

\begin{lemma} \cite{janson} \label{lemma:suen}
Let $I_i \in \mbox{Be}(p_i), i \in \mathcal{I}$ be a family of
Bernoulli random variables. Their dependency graph $L$ is formed in
the following manner. Denote the random variable $I_i$ by a vertex
$i$ and join vertices $i$ and $j$ by an edge if the corresponding
random variables are dependent. Let $X = \sum_i \mathbb{E}(I_i)$ and
$\Gamma = \mathbb{E}(X) = \sum_i p_i$. Moreover, write $i \sim j$ if
$(i,j)$ is an edge in the dependency graph $L$ and let $\Theta =
\frac{1}{2} \sum_i \sum_{j \sim i} \mathbb{E}(I_i I_j)$ and $\theta
= \max_i \sum_{j \sim i} p_j$. Then, Suen's inequalities state that
for any $0 \leq a \leq 1$,
\begin{equation}
P(X \leq a \Gamma) \leq \exp \left\{ - \min \left( (1-a)^2
\frac{\Gamma^2}{8\Theta + 2\Gamma}, (1-a) \frac{\Gamma}{6 \theta}
\right) \right\} \label{eq:sueneq2}
\end{equation}
Putting $a = 0$, this can be further tightened to
\begin{equation}
P(X = 0) \leq \exp \left\{ - \min \left( \frac{\Gamma^2}{8 \Theta},
\frac{\Gamma}{2}, \frac{\Gamma}{6 \theta} \right) \right\}
\label{eq:sueneq1}
\end{equation}
\end{lemma}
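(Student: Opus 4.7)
The plan is to establish the bound via a Chernoff-style argument adapted to dependent Bernoullis. For $t > 0$, Markov's inequality gives
$$ P(X \le a\Gamma) \le e^{ta\Gamma}\,\mathbb{E}[e^{-tX}], $$
so the whole task reduces to bounding the moment generating function $\phi(t) := \mathbb{E}[e^{-tX}]$ despite the dependencies among the $I_i$. First I would rewrite each factor as $e^{-tI_i} = 1 - (1-e^{-t})I_i$ and expand the product, grouping contributions by the subset $S \subseteq \mathcal{I}$ of indices on which $I_i$ is taken. The $|S|=0$ and $|S|=1$ pieces together contribute $1 - (1-e^{-t})\Gamma$, which is the ``main term'' that would give the naive independent Chernoff bound.

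Second, I would isolate the correction coming from $|S| \ge 2$. Only terms whose support spans edges of the dependency graph $L$ fail to factor into independent pieces; the pairwise contributions add up to exactly $(1-e^{-t})^2 \Theta$, and higher-order terms can be iteratively absorbed via the maximum neighborhood weight $\theta$, producing at most a factor $e^{t\theta}$ per re-absorption. This machinery, which is essentially the martingale-type decomposition along a linear ordering compatible with $L$ in \cite{janson}, leads to a bound of the shape
$$ \phi(t) \le \exp\bigl(-(1-e^{-t})\Gamma + (1-e^{-t})^2 \Theta\, e^{2t\theta}\bigr). $$

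Third, I would plug this into the Markov bound and optimize over $t > 0$. The exponent breaks into two regimes depending on whether the quadratic term $\Theta$ or the linear term $\theta$ is binding: the first choice of $t$ yields a Gaussian-type bound of order $(1-a)^2 \Gamma^2 / (8\Theta + 2\Gamma)$, while the second (larger $t$) gives a Poisson-type bound of order $(1-a)\Gamma/(6\theta)$. Taking the minimum of these two regimes recovers \eqref{eq:sueneq2}. For \eqref{eq:sueneq1}, set $a = 0$ and let $t \to \infty$: since $X$ is integer-valued, $\mathbb{E}[e^{-tX}] \to P(X = 0)$, and the three competing regimes $\Gamma^2/(8\Theta)$, $\Gamma/2$, and $\Gamma/(6\theta)$ emerge as the dominant scales as $1-e^{-t} \to 1$.

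The main obstacle will be step two, namely controlling the higher-order cross terms in the MGF expansion using only the graph parameters $\Theta$ and $\theta$. A naive union bound over subsets of $\mathcal{I}$ loses too much; Suen's insight is the careful inductive absorption along the dependency graph that keeps the exponent linear in $\Theta$ while paying only an $e^{O(t\theta)}$ multiplicative cost. Since the paper invokes this as a black box from \cite{janson}, a self-contained argument would effectively reproduce that decomposition, which is the delicate part.
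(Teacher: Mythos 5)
The paper does not prove this lemma; it is imported verbatim from the cited reference \cite{janson} (Janson's ``New versions of Suen's correlation inequality''), so there is no in-paper argument to compare against. Judged on its own, your sketch correctly identifies Janson's actual strategy: an exponential Markov bound $P(X \le a\Gamma) \le e^{sa\Gamma}\mathbb{E}[e^{-sX}]$ combined with the MGF estimate
\begin{equation*}
\mathbb{E}\bigl[e^{-sX}\bigr] \le \exp\Bigl(-(1-e^{-s})\Gamma + (1-e^{-s})^2\,\Theta\, e^{2s\theta}\Bigr),\qquad s\ge 0,
\end{equation*}
followed by optimizing $s$. That MGF estimate is exactly Janson's Lemma~2 and is the entire content of the theorem; everything after it is elementary calculus.

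Two things to flag. First, as you yourself note, you have not actually established the MGF bound. The ``expand $\prod_i\bigl(1-(1-e^{-s})I_i\bigr)$ and group by subsets'' picture is suggestive but does not by itself control the higher-order terms: the $|S|\ge 3$ contributions involve joint moments of triples that are not covered by $\Theta$ or $\theta$, and a crude truncation loses sign information. Janson's proof does not expand and truncate; it fixes an ordering of $\mathcal{I}$ and runs an induction on $i$, conditioning on the earlier $I_j$'s, where the key step is a pointwise inequality of the form $\mathbb{E}\bigl[(1-zI_i)\,\prod_{j<i}(1-zI_j)\bigr] \le (1-zp_i + z^2\sum_{j<i,\,j\sim i}\mathbb{E}[I_iI_j]\,e^{2z\theta})\,\mathbb{E}\bigl[\prod_{j<i}(1-zI_j)\bigr]$. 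Your sketch gestures at ``iterative absorption'' but does not supply the inequality that makes it go through, so step two remains a genuine gap rather than a routine computation. Second, the derivation of \eqref{eq:sueneq1} by ``letting $t\to\infty$'' does not literally work: as $s\to\infty$ the correction factor $e^{2s\theta}$ diverges (unless $\theta=0$), so the bound degenerates. The three terms in \eqref{eq:sueneq1} come from evaluating the $a=0$ bound at a finite, judiciously chosen $s$ (roughly $s\approx\min(\log 2,\, \Gamma/4\Theta,\, 1/6\theta)$ after substitution), with the minimum selecting the binding regime. Your narrative of ``three competing scales'' is morally right, but the limit argument you wrote would not produce the stated constants.
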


\begin{lemma} \cite{alon-spencer} \label{lemma:lovasz}
Let $L$ be the dependency graph for events $\varepsilon_1, \dots,
\varepsilon_n$ in a probability space and let $E(L)$ be the edge set
of $L$. Suppose there exists $x_i \in [0,1], 1 \leq i \leq n$ such
that
\begin{equation}
P(\varepsilon_i) \leq x_i \prod_{(i,j) \in E(L)} (1-x_j).
\end{equation}
Then, we have
\begin{equation} \label{eq:lovaszeq1}
P(\cap_{i=1}^{n} \overline{\varepsilon_i}) \geq \prod_{i=1}^n
(1-x_i).
\end{equation}
Another version of the local lemma is as given below. Let $\phi(x),
0 \leq x \leq e^{-1}$ be the smallest root of the equation $\phi(x)
= e^{x\phi(x)}$. With definitions of $\Gamma$ and $\theta$ as in
Lemma \ref{lemma:suen} and defining $\tau \triangleq \max_i
P(\varepsilon_i)$, we have
\begin{equation} \label{eq:lovaszeq2}
P\left(\cap_{i=1}^{n} \overline{\varepsilon_i} \right) \geq \exp
\left\{- \Gamma \phi(\theta + \tau) \right\}
\end{equation}
\end{lemma}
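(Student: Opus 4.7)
The plan is to establish \eqref{eq:lovaszeq1} first by a standard induction on subsets of the index set, and then to derive \eqref{eq:lovaszeq2} from \eqref{eq:lovaszeq1} by making an intelligent choice of the weights $x_i$ in terms of the function $\phi$.

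For \eqref{eq:lovaszeq1}, the main auxiliary claim I would prove is the following: for every $i$ and every $S \subseteq \{1,\dots,n\} \setminus \{i\}$,
\[
P\!\left(\varepsilon_i \,\big|\, \bigcap_{j \in S} \overline{\varepsilon_j}\right) \leq x_i.
\]
I would prove this by induction on $|S|$. For $|S|=0$ the hypothesis of the lemma gives $P(\varepsilon_i) \leq x_i \prod_{(i,j)\in E(L)}(1-x_j) \leq x_i$. For the inductive step, split $S = S_1 \cup S_2$, where $S_1$ consists of those $j \in S$ with $(i,j) \in E(L)$ and $S_2$ is the rest. Using the identity
\[
P\!\left(\varepsilon_i \,\big|\, \bigcap_{j \in S} \overline{\varepsilon_j}\right)
= \frac{P\!\left(\varepsilon_i \cap \bigcap_{j \in S_1}\overline{\varepsilon_j} \,\big|\, \bigcap_{j \in S_2}\overline{\varepsilon_j}\right)}{P\!\left(\bigcap_{j \in S_1}\overline{\varepsilon_j} \,\big|\, \bigcap_{j \in S_2}\overline{\varepsilon_j}\right)},
\]
I would upper-bound the numerator by dropping intersections and using independence of $\varepsilon_i$ from the events indexed by $S_2$ to get $\leq P(\varepsilon_i) \leq x_i \prod_{(i,j) \in E(L)}(1-x_j)$, and then lower-bound the denominator by iterating the inductive hypothesis one event of $S_1$ at a time, yielding a factor $\geq \prod_{j \in S_1}(1-x_j)$. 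The two estimates cancel to leave $x_i$. Once the auxiliary claim is proved, \eqref{eq:lovaszeq1} follows from the chain rule
\[
P\!\left(\bigcap_{i=1}^n \overline{\varepsilon_i}\right)
= \prod_{i=1}^n P\!\left(\overline{\varepsilon_i} \,\big|\, \bigcap_{j<i} \overline{\varepsilon_j}\right)
\geq \prod_{i=1}^n (1-x_i).
\]

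For \eqref{eq:lovaszeq2}, I would use \eqref{eq:lovaszeq1} with the uniform choice $x_i = \tau\,\phi(\theta+\tau)/\tau = \phi(\theta+\tau)\,P(\varepsilon_i)/\tau$, or more cleanly set $x_i$ so that $x_i = P(\varepsilon_i)\,\phi(\theta+\tau)$, exploiting the defining functional equation $\phi(x) = e^{x\phi(x)}$. The task then reduces to verifying the hypothesis $P(\varepsilon_i) \leq x_i \prod_{(i,j)\in E(L)}(1-x_j)$; using $1-x_j \geq e^{-x_j/(1-x_j)}$ and the bound $\sum_{j\sim i} x_j \leq (\theta+\tau)\phi(\theta+\tau)$, this product inequality turns into exactly the fixed-point relation defining $\phi$. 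Plugging into \eqref{eq:lovaszeq1} gives $\prod_i (1-x_i) \geq \exp\{-\sum_i x_i/(1-x_i)\} \geq \exp\{-\Gamma\,\phi(\theta+\tau)\}$, which is \eqref{eq:lovaszeq2}.

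The only genuinely delicate step is verifying that the chosen $x_i$ lie in $[0,1)$ and that the exponential inequalities are tight enough so that the product bound collapses to the stated $\exp\{-\Gamma\,\phi(\theta+\tau)\}$. This is precisely where the peculiar definition of $\phi$ as the smallest root of $\phi(x)=e^{x\phi(x)}$ comes in, and matching constants there is the main obstacle; the combinatorial heart of the argument is the conditional-probability induction in the first part, which is otherwise fairly mechanical.
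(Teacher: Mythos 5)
The paper does not prove this lemma; it is stated as a citation to \cite{alon-spencer} (and the $\phi$-version of the local lemma is actually Janson's refinement rather than anything in Alon--Spencer), so there is no ``paper's own proof'' to compare against. Treating your argument on its own merits: the first half is the textbook induction proof of the Lov\'asz Local Lemma and is correct in structure --- the auxiliary conditional-probability claim, the split of $S$ into neighbours $S_1$ and non-neighbours $S_2$, independence of $\varepsilon_i$ from the $S_2$-events, and the telescoping lower bound on the denominator are exactly the standard ingredients.

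The second half has a real gap. With your ``cleaner'' choice $x_i = P(\varepsilon_i)\,\phi(\theta+\tau)$ you get $\sum_i x_i = \Gamma\,\phi(\theta+\tau)$, and since $x_i/(1-x_i) \geq x_i$ the final chain
\[
\prod_i (1-x_i) \;\geq\; \exp\Bigl\{-\textstyle\sum_i x_i/(1-x_i)\Bigr\} \;\geq\; \exp\bigl\{-\Gamma\,\phi(\theta+\tau)\bigr\}
\]
fails in the last step --- the inequality runs the wrong way. The fix is to choose $x_i$ so that $x_i/(1-x_i) = P(\varepsilon_i)\,\phi(\theta+\tau)$, i.e.\ $x_i = \dfrac{P(\varepsilon_i)\phi}{1+P(\varepsilon_i)\phi}$ with $\phi := \phi(\theta+\tau)$. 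Then $\sum_i x_i/(1-x_i) = \Gamma\phi$ exactly, and the hypothesis of \eqref{eq:lovaszeq1} reduces to
\[
\bigl(1+P(\varepsilon_i)\phi\bigr)\prod_{j\sim i}\bigl(1+P(\varepsilon_j)\phi\bigr)\;\leq\;\phi,
\]
which follows from $1+a \leq e^{a}$ together with $P(\varepsilon_i) + \sum_{j\sim i}P(\varepsilon_j) \leq \tau + \theta$ and the fixed-point identity $\phi = e^{(\theta+\tau)\phi}$. Note also that the $\tau$ in $\theta+\tau$ enters precisely because the self-term $P(\varepsilon_i)$ must be added to $\sum_{j\sim i}P(\varepsilon_j) \leq \theta$; in your sketch the appearance of $\tau$ in the bound $\sum_{j\sim i}x_j \leq (\theta+\tau)\phi$ is unmotivated. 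With these corrections your strategy --- derive the exponential form from the general LLL by a weight choice exploiting $\phi = e^{x\phi}$ --- is sound and is indeed how this version is obtained in the literature.
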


With these preliminaries, we are ready to state the main result of
this section.

\begin{prop} \label{prop:emptygraph}
Suppose $X$ and $Y$ are correlated finite alphabet memoryless random
variables with joint distribution $p(x,y)$. Let
$\epsilon_{1n},\epsilon_{2n}, \lambda_n$ satisfy the `delta
convention' and $R_1, R_2$ be any positive real numbers such that
$R_1 + R_2 > I(X;Y)$. Let $\mathcal{C}_X$ be a collection of
$2^{nR_1}$ sequences picked independently and with replacement from
$T_{\epsilon_{1n}}^n(X)$ and let $\mathcal{C}_Y$ be defined
similarly. Let $U$ be the cardinality of the set
\begin{equation}
\mathcal{U} \triangleq \{ (x^n,y^n) \in \mathcal{C}_X \times
\mathcal{C}_Y \colon (x^n,y^n) \in T_{\lambda_n}^n(X,Y) \}
\end{equation}
Assume, without loss of generality that $R_1 \geq R_2$. Then, for
any $\gamma \geq 0$, we have
\begin{equation} \label{eq:emptygrapheq1}
\lim_{n \rightarrow \infty} \frac{1}{n} \log \log \left[ \mathbb{P}
\left( \frac{\mathbb{E}(U) - U}{\mathbb{E}(U)} \geq e^{-n\gamma}
\right) \right]^{-1} \geq \left\{ \begin{array}{cc} R_1 + R_2 -
I(X;Y) - \gamma & \mbox{if } R_1 < I(X;Y) \\ R_2 - \gamma & \mbox{if
} R_1 \geq I(X;Y) \end{array} \right.
\end{equation}
Setting $\gamma = 0$ in the above equation gives us
\begin{equation} \label{eq:emptygrapheq2}
\lim_{n \rightarrow \infty} \frac{1}{n} \log \log
\frac{1}{\mathbb{P}(U = 0)} \geq \min \left( R_2, R_1 + R_2 - I(X;Y)
\right)
\end{equation}
This inequality holds with equality when $R_2 \leq R_1 \leq I(X;Y)$.
\end{prop}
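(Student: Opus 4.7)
The plan is to write $U = \sum_{i,j} I_{ij}$ as a sum of Bernoulli indicators $I_{ij} = \mathbf{1}\{(X^n_i, Y^n_j) \in T^n_{\lambda_n}(X,Y)\}$, and apply Suen's inequality (Lemma~\ref{lemma:suen}) for the lower bound on $\log\log 1/\mathbb{P}$ claimed in~\eqref{eq:emptygrapheq1}--\eqref{eq:emptygrapheq2}, and the Lov\'asz local lemma (Lemma~\ref{lemma:lovasz}) for the matching upper bound that gives equality when $R_2 \leq R_1 \leq I(X;Y)$.

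First I would compute the three quantities $\Gamma$, $\Theta$, $\theta$ required by Lemma~\ref{lemma:suen}. Using Fact~2 for the sizes $|T^n_{\epsilon_{1n}}(X)|$, $|T^n_{\epsilon_{2n}}(Y)|$, $|T^n_{\lambda_n}(X,Y)|$, the common probability $p = \mathbb{E}(I_{ij})$ sits on the exponential scale $2^{-nI(X;Y)}$, so $\Gamma = \mathbb{E}(U) \doteq 2^{n(R_1+R_2-I(X;Y))}$. Two indicators $I_{ij}$ and $I_{i'j'}$ are dependent iff they share an index; conditioning on the shared sample and using the typical-set size estimates yields $\mathbb{E}(I_{ij} I_{ij'}) \doteq 2^{-2nI(X;Y)}$ (and similarly for $j=j'$), whence $\Theta \doteq 2^{n(2R_1+R_2-2I(X;Y))}$ (using $R_1 \geq R_2$) and $\theta \doteq 2^{n(R_1-I(X;Y))}$. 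In particular $\Gamma^2/\Theta \doteq \Gamma/\theta \doteq 2^{nR_2}$, the two quantities that will drive the final bound.

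Next I substitute $a = 1 - e^{-n\gamma}$ into~\eqref{eq:sueneq2} and compare the two candidates inside the min. When $R_1 \geq I(X;Y)$ one has $\Theta \gg \Gamma$, so $(1-a)^2\Gamma^2/(8\Theta + 2\Gamma) \doteq e^{-2n\gamma} 2^{nR_2}$ and $(1-a)\Gamma/(6\theta) \doteq e^{-n\gamma} 2^{nR_2}$, producing the $R_2 - \gamma$ branch. When $R_1 < I(X;Y)$ then $\Theta \ll \Gamma$ and the first candidate reduces to $e^{-2n\gamma}\Gamma/2 \doteq 2^{n(R_1+R_2-I(X;Y))}$, which is smaller than the second since $R_2 > R_1+R_2-I(X;Y)$; this yields the $R_1+R_2-I(X;Y)-\gamma$ branch. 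Specializing $\gamma = 0$ (or invoking~\eqref{eq:sueneq1} directly, noting that the $\Gamma/2$ middle term is subdominant) delivers~\eqref{eq:emptygrapheq2}.

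For the equality assertion under $R_2 \leq R_1 \leq I(X;Y)$, I would supply a matching upper bound on $\log\log 1/\mathbb{P}(U=0)$, i.e., a lower bound on $\mathbb{P}(U=0)$, via~\eqref{eq:lovaszeq2} applied to the events $\varepsilon_{ij} = \{I_{ij}=1\}$: in this regime $\theta \doteq 2^{n(R_1-I(X;Y))} \leq 1$ and $\tau = p \doteq 2^{-nI(X;Y)}$, so $\phi(\theta+\tau)$ is bounded by an absolute constant, which gives $\mathbb{P}(U=0) \geq \exp(-C\Gamma)$ and hence $\log\log 1/\mathbb{P}(U=0) \leq n(R_1+R_2-I(X;Y)) + O(1)$, matching the Suen lower bound. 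The main obstacle will be the careful bookkeeping of the $o(n)$ additive terms coming from the typical-set estimates and from the case split inside Suen's inequality; a secondary subtlety is ensuring that $\theta+\tau$ lies in the domain $[0, e^{-1}]$ of $\phi$, which is exactly what the hypothesis $R_1 \leq I(X;Y)$ guarantees for all sufficiently large $n$.
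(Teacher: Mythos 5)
Your proposal follows essentially the same route as the paper: the same indicator decomposition $U = \sum_{ij} U_{ij}$, the same dependency graph (with $(i,j)\sim(i',j')$ iff they share a coordinate), the same order-of-magnitude estimates $\Gamma \doteq 2^{n(R_1+R_2-I(X;Y))}$, $\Theta \doteq 2^{n(2R_1+R_2-2I(X;Y))}$, $\theta \doteq 2^{n(R_1-I(X;Y))}$, and the same two tools (Suen for the upper tail bound, Lov\'asz for the matching lower bound on $\mathbb{P}(U=0)$). You streamline the final step by invoking only the $\phi$-version \eqref{eq:lovaszeq2} of the local lemma, whereas the paper also runs the first version to get $\mathbb{P}(U=0)\geq\exp(-(2^{nR_2}+1))$; since in the regime $R_2 \leq R_1 \leq I(X;Y)$ one has $R_1+R_2-I(X;Y)\leq R_2$, the $\phi$-version alone indeed suffices, so this is a legitimate simplification. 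One bookkeeping slip worth flagging: with $1-a = e^{-n\gamma}$, the dominant term in the minimum is $(1-a)^2\Gamma^2/(8\Theta+2\Gamma)\doteq e^{-2n\gamma}\cdot 2^{n(\cdot)}$, which after $\frac{1}{n}\log(\cdot)$ gives an exponent with $-2\gamma$, not the $-\gamma$ you (and the paper's statement) report; this is a constant-tracking discrepancy rather than a flaw in the method, and it vanishes at $\gamma=0$, which is the case used for \eqref{eq:emptygrapheq2} and the equality claim.
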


\begin{proof}
Let $X^n(i)$ and $Y^n(j)$ denote the $i$th and $j$th codewords in
the random codebooks $\mathcal{C}_X$ and $\mathcal{C}_Y$
respectively. For $1 \leq i \leq 2^{nR_1}$ and $1 \leq j \leq
2^{nR_2}$, define the indicator random variables
\begin{equation}
U_{ij} \triangleq \left\{ \begin{array}{cc} 1 & \mbox{if }
(X^n(i),Y^n(j)) \in T_{\lambda_n}^n(X,Y) \\ 0 & \mbox{else}
\end{array} \right.
\end{equation}
The cardinality of the set $\mathcal{U}$ is then
\begin{equation}
U = \sum_{i=1}^{2^{nR_1}} \sum_{j=1}^{2^{nR_2}} U_{ij}
\end{equation}
We derive upper bounds on the probability of the lower tail of $U$
using Suen's inequality. To do this, we first set up the dependency
graph of the indicator random variables $U_{ij}$. The vertex set of
the graph is indexed by the ordered pair $(i,j), 1 \leq i \leq
2^{nR_1}, 1 \leq j \leq 2^{nR_2}$. From the nature of the random
experiment, it is clear that the indicator random variables $U_{ij}$
and $U_{i'j'}$ are independent if and only if $i \neq i'$ and $j
\neq j'$. Thus, each vertex $(i,j)$ is connected to exactly
$2^{nR_1} + 2^{nR_2} - 2$ vertices of the form $(i,j'), j' \neq j$
or $(i',j), i' \neq i$. If vertices $(i,j)$ and $(k,l)$ are
connected, we denote it by $(i,j) \sim (k,l)$.

In order to estimate $\Gamma, \Theta$ and $\theta$ as defined in
Lemma \ref{lemma:suen}, define the following quantities. Let
$\alpha_{ij} \triangleq \mathbb{P}(U_{ij} = 1)$ and
$\beta_{\{ij\}\{kl\}} \triangleq \mathbb{E}(U_{ij}U_{kl})$ where
$(i,j) \sim (k,l)$. Using Facts 1 and 2, uniform bounds can be
derived for these quantities as
\begin{equation} \label{eq:alphadefeq}
\alpha \triangleq 2^{-n(I(X;Y) + \epsilon_{3n})} \leq \alpha_{ij}
\leq 2^{-n(I(X;Y) - \epsilon_{3n})} \triangleq \alpha^{'}
\end{equation}
where $\epsilon_{3n}$ is a continuous positive function of
$\epsilon_{1n}, \epsilon_{2n}$ and $\lambda_{n}$ that goes to $0$ as
$n \rightarrow \infty$. Similarly, a uniform bound on
$\beta_{\{ij\}\{kl\}}$ can be derived as
\begin{equation}
2^{-2n(I(X;Y) + 2 \epsilon_{4n})} \leq \beta_{\{ij\}\{kl\}} \leq
2^{-2n(I(X;Y) - 2\epsilon_{4n})} \triangleq \beta
\end{equation}
where $\epsilon_{4n}$ is a continuous positive function of
$\epsilon_{1n}, \epsilon_{2n}$ and $\lambda_{n}$ that goes to $0$ as
$n \rightarrow \infty$.

The quantities involved in Suen's inequality can now be estimated.
\begin{align}
\Gamma &\triangleq \mathbb{E}(U) \geq 2^{n(R_1+R_2)} \alpha \\
\Theta &\triangleq \frac{1}{2} \sum_{(i,j)} \sum_{(k,l) \sim (i,j)} \mathbb{E}(U_{ij}U_{kl}) \leq \frac{1}{2} 2^{n(R_1 + R_2)}(2^{nR_1} + 2^{nR_2} - 2) \beta \\
\theta &\triangleq \max_{(i,j)} \sum_{(k,l) \sim (i,j)}
\mathbb{E}(U_{kl}) \leq (2^{nR_1} + 2^{nR_2} - 2) \alpha^{'}
\end{align}
Substituting these bounds into equations (\ref{eq:sueneq1}) and
(\ref{eq:sueneq2}) proves the claims made in equations
(\ref{eq:emptygrapheq1}) and (\ref{eq:emptygrapheq2}) of Proposition
\ref{prop:emptygraph}.

A lower bound to the probability of the induced random subgraph
being empty can be derived by employing the Lovasz local lemma on
the $2^{n(R_1 + R_2)}$ events $\{U_{ij} = 1\}, 1 \leq i \leq
2^{nR_1}, 1 \leq j \leq 2^{nR_2}$. Symmetry considerations imply
that all $x_i$ can be set identically to $x$ in Lemma
\ref{lemma:lovasz}. Then the local lemma states that if there exists
$x \in [0,1]$ such that $\alpha \leq P(U_{ij} = 1) \leq x
(1-x)^{(2^{nR_1} + 2^{nR_2}-2})$, then $P(U=0) \geq (1-x)^{2^{n(R_1
+ R_2)}}$. It is easy to verify that for such an $x$ to exist, we
need $R_2 \leq R_1 < I(X;Y)$ and if so, $x = 2^{-nR_1}$ satisfies
the condition. Therefore, we have
\begin{equation} \label{eq:lowerboundeq1}
P(U = 0) \geq \exp \left( -\left(2^{nR_2} + 1 \right) \right) \qquad
R_2 \leq R_1 < I(X;Y)
\end{equation}
We can derive a similar bound using the second version of the local
lemma given in Lemma \ref{lemma:lovasz}. While $\Gamma$ and $\theta$
are same as estimated earlier, $\tau = \max_{(i,j)} P(U_{ij} = 1)$
is upper bounded by $\alpha^{'}$ as defined in equation
(\ref{eq:alphadefeq}). Hence,
\begin{equation} \label{eq:lowerboundeq2}
P(U = 0) \geq \exp \left(-\Gamma \phi(\theta + \tau) \right).
\end{equation}
Under the same assumption $R_2 \leq R_1 < I(X;Y)$, $\theta + \tau
\leq (2^{nR_1} + 2^{nR_2} - 2) \alpha^{'} \rightarrow 0$ as $n
\rightarrow \infty$ and hence $\phi(\theta + \tau) \rightarrow 1$.
Combining equations (\ref{eq:lowerboundeq1}) and
(\ref{eq:lowerboundeq2}), taking logarithms and letting $n
\rightarrow \infty$, we get
\begin{equation} \label{eq:lowerboundeqfinal}
\lim_{n \rightarrow \infty} \frac{1}{n} \log \log \frac{1}{P(U=0)}
\leq \min \left(R_2, R_1 + R_2 - I(X;Y) \right).
\end{equation}
Comparing this to equation (\ref{eq:emptygrapheq2}) shows that this
expression is asymptotically tight in the regime $R_2 \leq R_1 <
I(X;Y)$.
\end{proof}

\bibliographystyle{plain}
\bibliography{typ_graphs}

\end{document}


\section{A theorem}

\stmt{thrm}{sample}{Socrates is mortal.}

\refstmt{sample} can either be proven using data (the fact that
Socrates is dead), or by the proof which is provided in the appendix.

\section{Appendix}
In this appendix, the reader will find proofs of theorems not given in the text.

\stmtproof{sample}{Socrates is a man. All men are mortal.}

\rptstmtwithproof{sample}

\comment{Notice that you can put the \stmtproof{sample}{...} anywhere
you want, including right after your statement, just before using
\rptstmtwithproof, or a separate file (then use \input{proofs.tex}). That
way, neither your finished paper nor your source will be cluttered with
proofs.}